\renewcommand{\ti}{{\times}}
\numberwithin{equation}{section}
\numberwithin{figure}{section}
\DeclareMathOperator{\trace}{\mafo{tr}}
\newcommand{\Bh}{{\mathsf B\mathsf h}}
\newcommand{\MM}{\mathfrak M(\Omega)}
\newcommand{\MMM}[1]{\mathfrak M(\Omega_{#1})}
\newcommand{\PP}{\mathfrak P(\Omega)}
\newcommand{\PPP}[1]{\mathfrak P(\Omega_{#1})}
\newcommand{\oti}{{\otimes}}
\newcommand{\He}{\mathsf{He}}
\newcommand{\FR}{\mathsf{FR}}
\newcommand{\mfP}{\mathfrak P}
\newcommand{\SEarrow}{\text{\smaller${}\searrow{}$}}
\begin{document}

\title
{Some notes on the Hellinger distance and \\various Fisher-Rao distances%
}

\author
{Alexander Mielke\thanks{Weierstraß-Institut f\"ur Angewandte
   Analysis und Stochastik, Anton-Wilhelm-Amo-Str.\,39, 10117 Berlin
   }
}

\date{2 October 2025}
 
\maketitle


\begin{abstract}  These expository notes introduce the Hellinger distance on the
  set of all measures and the induced Fisher-Rao distances for subsets of measures,
  such as probability measures or Gaussian measures. The historical background
  is highlighted and the relations and the distinct features of the two 
  distances are discussed. Moreover, we provide a dynamic characterization of 
  absolutely continuous curves in the Hellinger spaces in terms of the growth
  equation, which replaces the continuity equation in the theory of optimal
  transport.\medskip

\noindent
\emph{Keywords:} Hellinger distance, Fisher-Rao distance, Bhattacharya
distance, geodesic curves, product measures, Gaussian measures, exponential
distributions, Poisson distributions, metric cone, absolutely continuous
curves, growth equation.   \smallskip

\noindent
\emph{MSC 2000:} 
46G99  
01A60  
53C22  
58E10  
94A17  
\end{abstract}



\section{Introduction}
\label{se:intro}

The initial motivation for writing this mainly expository notes was the
question what are the origin of the names ``Hellinger distance'' and
``Fisher-Rao distance''. Hence, we will introduce these two concepts, which are
indeed closely related, in simple terms and explain the historical developments
that go back to Hellinger \cite{Hell07OQFU,Hell09NBTQ} and Kakutani
\cite{Kaku48EIPM} for the Hellinger distance and to Fisher \cite{Fish21MFTS}
and Rao \cite{Rao45IAAE} for the Fisher-Rao distance.  Unfortunately, these
names are sometimes mixed up and these notes provide a guideline for
distinguishing the two objects such that future mathematical discussion can
made more precise by avoiding unnecessary confusion.

A second goal arises from the recent interest in gradient flows in the
Hellinger space (e.g.\ \cite{CCHH24?FRGF, MieZhu25?ECHK}) and in the
combination of the Wasserstein distance and the Hellinger distance in the
transport growth distance called Hellinger-Kantorovich distance in
\cite{LiMiSa16OTCR, LiMiSa18OETP} and and Wasserstein-Fisher-Rao distance in
\cite{CPSV18IDOT, CPSV18UOTG}. While there is a large body of work in
characterizing absolutely continuous curves in the Wasserstein space via the
dynamic theory of Benamou-Brenier \cite{BenBre00CFMS} and Otto
\cite{Otto01GDEE} using the continuity equation, there is no counterpart
available for the absolutely continuous curves in the Hellinger space. In
principle, the corresponding theory can be extracted from the analysis of the
Hellinger-Kantorovich theory in \cite{LiMiSa18OETP}, but this would lead to a
huge and inscrutable overhead. In Section \ref{su:AbsCurves} we provide a
short and mathematically complete characterization, which shows the relations
between the metric derivative and the growth equation, which replaces the
continuity equation.

The Hellinger distance  $\He (\mu_0,\mu_1)$ between arbitrary measures $\mu_0,
\mu_1 \in \MM$ on a measure space $\Omega$ are is defined by 
\begin{equation}
  \label{eq:I.def.Hell}
  \He (\mu_0,\mu_1)^2 = \sigma^2 \int_\Omega \Big( \big(\frac{\rmd\mu_1}{\rmd
    \lambda}\big)^{1/2}  -\big(\frac{\rmd\mu_0}{\rmd
    \lambda}\big)^{1/2}  \Big)^2 \dd \lambda = \sigma^2 \Big( \mu_0(\Omega) +
  \mu_1(\Omega) - 2 \sqrt{\mu_0\mu_1}\,(\Omega) \Big), 
\end{equation}
where $\lambda \in \MM$ is any measure such that $\mu_0\ll \lambda$ and
$\mu_1 \ll \lambda$, and
$\frac{\rmd\mu_j}{\rmd\lambda} \in \rmL^1(\Omega,\lambda)$ denotes the
Radon-Nikodym derivative. Here we have introduced a scaling factor $\sigma>0$
into the definition because in various places in the literature different
factors are chosen. We keep the factor throughout to facility the comparison
with the literature, but it is also helpful to understand the structure better.  

The definition of $\He$ goes back to Kakutani \cite{Kaku48EIPM} and was chosen
to honor the contribution of Hellinger in \cite{Hell07OQFU,Hell09NBTQ} which
showed how to define the Hellinger integral 
\begin{equation}
  \label{eq:RiemHell}
  \sqrt{\mu_0\mu_1}(B) = \int_B  \big(\frac{\rmd\mu_0}{\rmd\lambda}\: 
    \frac{\rmd\mu_1}{\rmd\lambda}\big)^{1/2}  \dd \lambda 
    \quad \text{for measurable } B\subset \Omega,
\end{equation}
much before the introduction of Radon-Nikodym derivatives. Thus, the geometric
mean $\sqrt{\mu_0\mu_1}$ is again a well-defined measure in $\MM$. Moreover,
\cite{Kaku48EIPM} showed that for every $\lambda\in \MM$ the Hilbert space
$\rmL^2(\Omega,\lambda)$ can be isometrically embedded into $(\MM,\He)$
via the mapping $\rmL^2(\Omega,\lambda) \ni g \mapsto \sqrt{g} \,\lambda \in
\MM$. This embedding immediately shows that 
the Hellinger distance is even a geodesic distance in the sense that for every
pair $(\mu_0,\mu_1)$ there exists a unique constant-speed geodesic curve given
by 
\begin{equation}
  \label{eq:I.HellGeod}
 \begin{aligned}
 \gamma^\He_{\mu_0\to\mu_1}(s) &= (1{-}s)^2\mu_0 + s^2 \mu_1 +2(s{-}s^2)
\sqrt{\mu_0\mu_1}\\
&= (1{-}s) \mu_0 + s\mu_1 -(s{-}s^2) \frac1{\sigma^2} \He(\mu_0,\mu_1)^2 \quad
\text{for } s\in [0,1]. 
\end{aligned}
\end{equation}
Moreover, one can define a pseudo-Riemannian structure on $\MM$ given by the
quadratic form of Hellinger type (see \eqref{eq:IntegralHelli} in the historical
remarks in Section \ref{su:History})
\begin{equation}
  \label{eq:I.QuadrForm}
  \bfg_\mu(\nu_1,\nu_2) = \left\{ \ba{cl} \ds \frac{\sigma^2}4 
 \int_\Omega  \frac{\rmd\nu_1}{\rmd\mu}\:\frac{\rmd\nu_2}{\rmd\mu}\:\dd \mu &
 \text{if } \nu_1,\nu_2 \ll \mu, \\ 
\infty & \text{else}. \ea\right. 
\end{equation}
In Theorem \ref{th:AbsContCurve} we provide the mathematically connection of
$\bfg_\mu$ with the metric speed in $(\MM,\He)$. Moreover, we refer to
\cite{AJLS15IGSS,BaBrMi16UFRM} for a proof of the uniqueness of this Riemannian
metric under diffeomorphisms.

The above Riemannian structure was indeed introduced by Fisher in
\cite{Fish21MFTS} when studying finite-dimensional parameterized families of
measures. Considering the family
\[
\calS = \bigset{ f(p;\cdot) \in \rmL^1(\R^n)\cap \mfP(\R^n) }{ p \in D\subset \R^m}
\]
Then, the Fisher information metric is defined via the matrix  $\bbF(p)\in
\R^{m\ti m}_{\geq 0}$  given by 
\begin{equation}
  \label{eq:FishMetric}
\begin{aligned}
  a\cdot \bbF(p)b &:=  \frac{\sigma^2}4 \int_{\R^n} \rmD_p \log\big(f(p,x) 
            \big)[a] \:\rmD_p\log\big(f(p,x)\big) [b] \: f(p,x) \dd x\\
& =   \frac{\sigma^2}4   \int_{\R^n} \frac{\rmD_p f(p,x) [a] \,\rmD_p
         f(p,x) [b]} {f(p,x)}  \dd x \\
& = - \frac{\sigma^2}4  \int_{\R^m} \rmD^2_p \log\big(f(p,x)\big)[a,b] \:f(p,x)\dd x 
\end{aligned}
\end{equation}
The Fisher-Rao distance was introduced in \cite{Rao45IAAE} and is defined as the
distance on $D$ induced by the metric tensor $\bbF$, namely 
\[
\FR_\calS(p_0,p_1): =\inf\Bigset{\int_0^1\!\!\! \big( p'(s)\cdot \bbF(p(s))p')s)
  \big)^{1/2} \dd s   }{ p\in \rmC^1([0,1];D),\ p(0)=p_0,\ p(1)=p_1}.
\]
The strength of the Fisher-Rao distance is that it does not depend on the
particular choice of the pasteurization, but only on the subset $\calS
\subset \MM$. Thus, we will $\FR_\calS\big(f(p_0,\cdot)\rmd x,f(p_1,\cdot)\rmd
x)\big) $ instead of $\FR_\calS(p_0,p_1)$. 

The above construction is not restricted to finite dimensional submanifolds,
but can be generalized to more general subsets $\calS$ of $\MM$. Indeed, 
$\FR_\calS:\calS\ti \calS \to[0,\infty]$ can be understood as the intrinsic
length or distance in $\calS$ induced from $(\MM,\He)$. For this, we consider
continuous paths $\gamma:[0,1]\to\calS$, define their intrinsic length
$L_\He(\gamma)$, see \eqref{eq:IntrLength}, and then define the
Fisher-Rao distance in $\calS$ by 
\[
\FR_\calS(\mu_0,\mu_1) := \inf\Bigset{L_\He(\gamma) }{ \gamma\in
  \rmC^0([0,1];\calS),\ \gamma(0)=\mu_0,\ \gamma(1)=\mu_1) }.
\]
Of course choosing $\calS=\MM$, i.e., the space of all measures, we
have $\FR_{\MM}=\He$, and this is the reason why sometimes the Hellinger
distance is called Fisher-Rao distance. However, it is better to distinguished
the refined concept of Fisher-Rao distances $\FR_\calS$ which depends on the
chosen submanifold or subset $\calS$ of the set of all measures $\MM$.
 
In general, one has $\FR_\calS(\mu_0,\mu_1) \geq \He(\mu_0,\mu_1) $ where
equality holds only in exceptional cases, namely, if the Hellinger geodesic, as
given in \eqref{eq:I.HellGeod}, totally lies in $\calS$. In case that $\calS$
is a smooth submanifold the local Fisher-Riemann metric is simply the
restriction of the quadratic form \eqref{eq:I.QuadrForm} of Hellinger type,
one can expect local closeness of $\He$ and $\FR_\calS$, i.e.\
\[
\He(\mu_0,\mu_1)^2 \leq \FR_\calS(\mu_0,\mu_1)^2 \leq \He(\mu_0,\mu_1)^2 + O\big(
\He(\mu_0,\mu_1)^3\big)  \quad \text{ as } \He(\mu_0,\mu_1)\to 0.
\]
Already the restriction of $\He$ to the probability measures in $\PP$ leads to
a new distance, namely the Bhattacharya distance \cite{Bhat42DD,Rao45IAAE}
\[
\Bh(\mu_0,\mu_1) = 2\sigma \arcsin \big(\tdfrac1{2\sigma}\,\He(
\mu_0,\mu_1)\big) . 
\]
We refer to Section \ref{su:BhattDist} for more details and emphasize that
$\sigma>0$ appears nonlinearly. 

The plan of the paper is as follows. In Section \ref{se:Hellinger} we present
basic properties of the Hellinger distance such its the geodesic curves, the
embedding property into a Hilbert spaces (showing that the geometry is
flat), and the behavior under pushforwards. Moreover, we present some
historical remarks about Hellinger's contribution and the development of 
the name ``Hellinger distance''. The only  mathematically new part of these
notes are the short and self-contained characterization of absolutely
continuous curves in $(\MM,\He)$ using metric speed and the growth equation,
see Theorem \ref{th:AbsContCurve}. 

In Section 3 we discuss the abstract definition of the Fisher-Rao distance for
general subsets $\calS\subset \MM$. After treating the most important example
$\calS=\PP$ leading to the Bhattacharya distance, we show how the Fisher-Rao
distance on a set $\calP \subset \PP$ can be used to construct the Fisher-Rao
distance on the cone $\calS={[0,\infty[}\calP\subset \MM$ in Theorem
\ref{th:FRCones}. This is a general construction (see e.g.\ \cite{BuBuIv01CMG} 
for general geodesic spaces, where the involvement of the scaling parameter
$\sigma>0$ is nontrivial. In Section \ref{su:ProdMeas} we show that the
Fisher-Rao distance for product probability measures satisfies
$\FR_{\calP_1\oti \calP_2}^2  = \FR_{\calP_1}^2 + \FR_{\calP_2}^2$. 

Section \ref{se:FamProbDistr} is devoted to simple examples, namely
$\calS_\mafo{trans}$ containing all translations of a measure on $\R^n$, the
family $\calS_\mafo{Poiss}$ of multivariate Poisson distributions on $\N_0^d$,
and the family $\calS_\mafo{exp}$ of exponential distributions on $(\calR_\geq
0)^n$. Finally, in Section \ref{su:Gaussian} we discuss the known results on
the Fisher-Rao distance on $\calS_\mafo{Gauss}$, the set of Gaussian
distributions on $\R^d$: only for $d=1$ an explicit formula is known, and for
$d\geq 2$ only partial results are available.

\section{Properties of the Hellinger distance}
\label{se:Hellinger}

As in \cite{Kaku48EIPM} we start from a measure space $(\Omega,\mathfrak A)$,
i.e.\ $\mathfrak A$ is a $\sigma$-algebra over the set $\Omega$. By $\MM$ we
denote the set of all (non-negative) finite measures on $(\Omega,\mathfrak A)$,
i.e.\ countably additive set functions. The subset of probability measures is
denoted by $\PP=\bigset{\mu\in \MM }{ \mu(\Omega)=1}$.  
 
\subsection{Hellinger's integral} 
\label{su:HellIntegral}

As mentioned above the Hellinger distance relies on the the so-called Hellinger
integral,  which is the geometric mean of two measures $\mu_0,\mu_1 \in
\MM$. In modern terms the measure $\sqrt{\mu_0\mu_1} \in \MM$ is
defined by the Radon-Nikodym derivative via 
\begin{equation}
  \label{eq:HellDistRaNy}
\sqrt{\mu_0\mu_1} = \Big(\frac{\rmd \mu_0}{\rmd \lambda}\Big)^{1/2} 
 \Big(\frac{\rmd\mu_1}{\rmd \lambda} \Big)^{1/2} \lambda  \quad \text{for
   every }\lambda \gg \mu_0,\mu_1. 
\end{equation}
The geometric mean can also be defined by partitions as follows 
\begin{equation}
  \label{eq:HellDistInfim}
  \sqrt{\mu_0\mu_1} (A) = \inf\Bigset{ \sum_{i\in I} \mu_0(A_i)^{1/2}
  \mu_1(A_i)^{1/2} }{ A= \bigcup_{i\in I} A_i, \ A_i\cap A_j =\emptyset \text{ for }
  i\neq j}.
\end{equation}
Using that $(r,s) \mapsto (rs)^{1/2}$ is concave, it is easy to see that
refining partitions of a set $A$ leads to a smaller value (use
$\sqrt{\theta_i{+}\theta_j}\sqrt{\nu_i{+}\nu_j} \geq
\sqrt{\theta_i}\sqrt{\nu_i} + \sqrt{\theta_j}\sqrt{\nu_j}$). The historical
 Section \ref{su:History} will explain how this construction is related to
 Hellinger's work in \cite{Hell07OQFU,Hell09NBTQ}. 

\begin{remark}[Kolmogorov and Hellinger integrals]
\label{re:KolmHellInteg}
More generally, for a
positively one-homogeneous concave function $\varphi:{[0,\infty[}^N\to
{[0,\infty[}$ and measures
$\mu_1,..,\mu_n \in \MM$ the measure $\varphi(\mu_1,\mu_1,..,\mu_n)
\in \MM$ can be defined by an infimum over partitions as in
\eqref{eq:HellDistInfim}. The Kolmogorov integral of $g \in
\rmL^\infty(\Omega)$ is then defined as $\int_\Omega g \dd
\varphi(\mu_1,\mu_1,..,\mu_n)$, see \cite{Kolm30UI}. In particular, using  
$\phi_\alpha (r,s)= r^\alpha s^{1-\alpha}$ with $\alpha\in [0,1]$ and two
measures $\mu_0, \mu_1 \in \MM$, one can define the measures 
$\phi_\alpha(\mu_0,\mu_1) \in \MM$ and the so-called $\alpha$-Hellinger integral
$\int_\Omega g \rmd \phi_\alpha(\mu_0,\mu_1)$ for $g\in \rmL^\infty(\Omega)$. 
\end{remark}

\subsection{The topology of $(\MM,\He)$}
\label{su:Topology}

The topology induced by $\He$ on $\MM$ is the norm topology induced by the
total variation
\[
\|\mu_1{-}\mu_0 \|_\mafo{TV} = \int_\Omega \big| \frac{\rmd \mu_1}{\rmd \lambda}
- \frac{\rmd \mu_0}{\rmd \lambda} \big| \dd \lambda \quad \text{for }
\mu_0,\mu_1\dd \lambda.
\]
However, we see that the total variation norm scales one-homogeneous with the
mass, while the Hellinger distance scales homogeneous of degree $1/2$: 
\[
\|r\mu_1{-}r\mu_0 \|_\mafo{TV} = r\|\mu_1{-}\mu_0 \|_\mafo{TV} 
\quad \text{and} \quad 
\He(r\mu_0,r\mu_1) = r^{1/2}\,\He(\mu_0,\mu_1)
\]
for $r\geq 0$ and $\mu_0,\mu_1\in \MM$. This is reflected in the lower and
upper estimate of the total variation norm, namely 
\begin{subequations}
  \label{eq:NormEstim}
\begin{align}
  \label{eq:NormEstim.low} 
& \|\mu_1{-}\mu_0 \|_\mafo{TV} \geq \frac1{\sigma^2}\,\He(\mu_0,\mu_1)^2,  
\\
 \label{eq:NormEstim.upp}
&\|\mu_1{-}\mu_0 \|_\mafo{TV} \leq
\sqrt{2(\mu_0(\Omega){+}\mu_1(\Omega))} \: \frac1\sigma\,\He(\mu_0,\mu_1).
\end{align}
\end{subequations}
To see this, choose $\lambda=\mu_0{+}\mu_1$ and write
$\mu_0=\big(\frac12-x\big)\lambda$ and $\mu_1=\big(\frac12+x)\lambda$ with
$x(\omega)\in [-\frac12,\frac12]$ \ $\lambda$-a.e.\  Using the elementary
estimates
\begin{equation}
  \label{eq:ElemEstim}
  2x^2\leq 1-\sqrt{1{-}4x^2} \leq 2|x| \quad \text{for } |x|\leq 1/2,
\end{equation}
 the first estimate
follows from
\[
\|\mu_1{-}\mu_0 \|_\mafo{TV} = \int_\Omega 2|x|\dd \lambda \geq  \int_\Omega\! 
\big(1{-}\sqrt{1{-}4x^2}\big)\dd \lambda = \frac1{\sigma^2}\,\He(\mu_0,\mu_1)^2. 
\]
The second estimate follows via an application of Cauchy-Schwarz' estimate:
\begin{align*}
\|\mu_1{-}\mu_0 \|_\mafo{TV}^2&= \bigg(\int_\Omega\! 2|x|\dd \lambda\bigg)^2
\leq  \int_\Omega \! 2\dd \lambda\int_\Omega\! 2x^2\dd \lambda\\
& \leq 2 \big(\mu_0(\Omega){+}\mu_1(\Omega)\big) \int_\Omega\! 
 \big(1- \sqrt{1{-}4x^2}\,\big)\dd \lambda = 
\frac2{\sigma^2}\big(\mu_0(\Omega){+}\mu_1(\Omega)\big)\,\He(\mu_0,\mu_1)^2 . 
\end{align*}

\subsection{Absolutely continuous curves, metric speed and the continuity
  equation in $(\MM,\He)$} 
\label{su:AbsCurves}

In this subsection we use the abstract theory developed in
\cite[Sec.\,1.1]{AmGiSa05GFMS} for a general metric space $(M,\calD)$. For
$p\in [1,\infty]$, a curve $\gamma:[0,1]\to M$ is called is $p$-absolutely
continuous if there exists $g\in \rmL^p([0,1])$ such that
$\calD(\gamma(r),\gamma(t)) \leq \int_r^t g(s) \dd s$ for all
$0\leq r< t \leq 1$. Every rectifiable curve can be reparametrized to a
Lipschitz curve (i.e.\ $p=\infty$), so we see that the optimal $p$ depends on
the parametrization.

In \cite[Thm.\,1.1.2]{AmGiSa05GFMS} it is shown that the metric speed 
\[
\lim_{h\searrow  0} \frac1h \,\calD\big(\gamma(s),\gamma(s{+}h)\big) :=
|\dot\gamma|_\calD(s) 
\]
exists a.e.\ in $[0,1]$, and that $\calD(\gamma(r),\gamma(t)) \leq \int_r^t
|\dot\gamma|_\calD(s) \dd s $. 

The aim of this subsection  is to characterize the metric speed and relate it
properly to Hellinger's quadratic form. The theory is developed in analogy to
\cite[Ch.\,8]{AmGiSa05GFMS}, however our case of the Hellinger distance is
considerably simpler than the case of the Otto-Wasserstein theory developed
there. The new ingredient is the so-called generalized continuity equation,
which should rather be called a \emph{growth equation} (GE) here. Given a curve
$\mu:[s_0,s_1] \to \MM; s \mapsto \mu_s$ we define the measure $\mu_{[0,1]}$ on
$Q:=[0,1]\ti \Omega$ via 
\[
\int_Q h(s,\omega) \dd\mu_{[0,1]}(s,\omega) := \int_{[0,1]}\! \int_\Omega
h(s,\omega) \dd\mu_s(\omega) \dd s .
\]
For a  growth-rate function $\xi \in \rmL^1(Q;\mu_{[0,1]})$ we say that the
pair $(\mu,\xi)   $ is a weak solution of the growth equation $\pl_s \mu_s =
\xi_s \mu_s$ if 
\begin{equation}
\label{eq:GrowthEqn}
  \int_Q \big( \eta'(s) \bm1_A(\omega)  + \xi(s,\omega) \eta(s)\bm1_A (\omega)\big)
  \dd \mu_{[0,1]}(s,\omega) \quad \text{for all }\eta\in
  \rmC_\rmc^1({]s_0,s_1[}) \text{ and } A \in \mathfrak A. 
\end{equation}

In some respects, the present theory is much simpler than the corresponding
theory for the Otto-Wasserstein case developed in \cite[Ch.\,8]{AmGiSa05GFMS};
however that are new complications because of the change of support of the
measures. From the equation $\pl_s \mu_s = \xi_s \mu_s$ one would naively guess
that the solution can be written as
$\mu(s) = \exp\big( \int_0^s\xi(r,\cdot)\dd r\big) \mu(0)$, but that cannot be
true in case where $\mu(0)$ and $\mu(1)$ have different support. For instance,
choosing $\omega_0\neq \omega_1$ in $\Omega $ and considering the curve
\[
  \mu(s) = a_0(s) \delta_{\omega_0} + \theta a_1(s) \delta_{\omega_2} \quad
  \text{with } \theta\geq 0,\ a_0(s) = (1{-}s)^{\gamma_0} \text{ and }a_1(s) =
  s^{\gamma_1},
\]
we have $\mu(j)=\theta^j \delta_{\omega_j}$ (Dirac measure) for
$j=1,2$. Moreover, the growth equation is satisfied by $(\mu,\xi)$ if
$\xi(s,\omega_j)= a'_j(s)/a_j(s)$. Moreover, for
$\min\{\gamma_0,\gamma_1\} >1{+}p$ we have
\[
\int_Q |\xi |^p \dd \mu_I= \int_0^1 \sum_{j=0}^1 \theta^j |a'_j|^p/a_j^{p-1}
\dd s = \sum_{j=0}^1 \theta^j \frac{\gamma_j^p}{\gamma_j{-}1{-}p} < \infty.
\]  
We also note that for the given growth-rate function $\xi$ and the given
initial condition $\mu(0)=\delta_{\omega}$ there are infinitely many solution
pairs $(\mu,\xi)$ for the growth equation \eqref{eq:GrowthEqn}, because
$\theta\geq 0$ is arbitrary. 

For simplicity, we restrict to the natural case $p=2$.  

\begin{theorem}
\label{th:AbsContCurve}
(A) If $\mu:[s_0,s_1]\to \MM$ is $2$-absolutely continuous in $(\MM,\He)$, then
there exists $\xi \in \rmL^2([s_0,s_1]\ti \Omega)$ such that $(\mu,\xi)$ solve
the growth equation \eqref{eq:GrowthEqn} and and the metric speed satisfies
\begin{equation}
  \label{eq:SpeedChar}
  |\dot \mu|_\He(s) = \frac\sigma2 \|\xi_s\|_{\rmL^2(\Omega,\mu_s} =\frac\sigma2
  \bigg( \int_\Omega \xi(s,\omega)^2\dd \mu_s(\omega)\bigg)^{1/2} \quad
  \text{a.e. on } [0,1].
\end{equation}

(B) Vice versa, if $\mu :[s_0,s_1]\to \MM$ is a continuous curve,
$\xi\in \rmL^2([s_0,s_1]\ti \Omega,\mu_{[s_0,s_1]})$, and $(\mu,\xi)$ solves
\eqref{eq:GrowthEqn}, then $\mu$ is $2$-absolutely continuous in $(\MM,\He)$
and \eqref{eq:SpeedChar} holds. 
\end{theorem}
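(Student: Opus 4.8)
The plan is to follow the standard scheme of \cite[Ch.\,8]{AmGiSa05GFMS} but exploit the fact that the Hellinger geometry is ``pointwise in $\omega$'': after passing to a common dominating measure, the curve $s\mapsto\mu_s$ becomes (essentially) a curve of densities, and the whole problem decouples into a family of one-dimensional problems indexed by $\omega$.

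\textbf{Part (A).} Suppose $\mu:[s_0,s_1]\to\MM$ is $2$-absolutely continuous. First I would fix a single reference measure $\lambda\in\MM$ dominating all $\mu_s$: since the curve is absolutely continuous in total variation (by \eqref{eq:NormEstim.upp}), one can take e.g.\ $\lambda=\int_{s_0}^{s_1}\mu_s\,\dd s$ (a finite measure by the mass bound that follows from $2$-absolute continuity), and write $\mu_s=\rho_s\lambda$ with $\rho_s\in\rmL^1(\Omega,\lambda)$, $\rho_s\geq 0$. Introduce $u_s:=\sqrt{\rho_s}\in\rmL^2(\Omega,\lambda)$. By the isometric embedding $g\mapsto g^2\lambda$ of (a ball in) $\rmL^2(\Omega,\lambda)$ into $(\MM,\He)$ recalled in Section~\ref{su:HellIntegral}, one has $\He(\mu_r,\mu_t)=\sigma\|u_t-u_r\|_{\rmL^2(\Omega,\lambda)}$ whenever the densities are taken with respect to the \emph{same} $\lambda$; hence $s\mapsto u_s$ is a $2$-absolutely continuous curve in the Hilbert space $\rmL^2(\Omega,\lambda)$. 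In a Hilbert space such a curve is differentiable a.e., its derivative $\dot u_s$ lies in $\rmL^2$, it satisfies $u_t-u_r=\int_r^t\dot u_s\,\dd s$ as a Bochner integral, and $|\dot\mu|_\He(s)=\sigma\|\dot u_s\|_{\rmL^2(\Omega,\lambda)}$ a.e. Now set $\xi_s:=2\dot u_s/u_s$ on $\{u_s>0\}$ (and $\xi_s:=0$ elsewhere); then $\dot\rho_s=2u_s\dot u_s=\xi_s\rho_s$, which is exactly the growth equation in density form, and
\[
\int_\Omega\xi_s^2\,\dd\mu_s=\int_{\{u_s>0\}}\frac{4\dot u_s^2}{u_s^2}\,u_s^2\,\dd\lambda
=4\|\dot u_s\|_{\rmL^2(\Omega,\lambda)}^2,
\]
so $|\dot\mu|_\He(s)=\tfrac{\sigma}{2}\|\xi_s\|_{\rmL^2(\Omega,\mu_s)}$, giving \eqref{eq:SpeedChar}; in particular $\xi\in\rmL^2([s_0,s_1]\ti\Omega,\mu_{[s_0,s_1]})$ since $\int_Q|\xi|^2\,\dd\mu_{[s_0,s_1]}=4\int_{s_0}^{s_1}\|\dot u_s\|_{\rmL^2}^2\,\dd s<\infty$. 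To pass from the ``strong'' identity $\rho_t-\rho_r=\int_r^t\xi_s\rho_s\,\dd s$ in $\rmL^1(\Omega,\lambda)$ to the weak formulation \eqref{eq:GrowthEqn}, I would test against $\eta(s)\bm1_A(\omega)$, integrate by parts in $s$, and use Fubini (justified by the $\rmL^2(Q,\mu_{[s_0,s_1]})$ bound on $\xi$ together with $\mu_{[s_0,s_1]}(Q)<\infty$).

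\textbf{Part (B).} Conversely, assume $\mu$ is continuous, $\xi\in\rmL^2(Q,\mu_{[s_0,s_1]})$, and $(\mu,\xi)$ solves \eqref{eq:GrowthEqn}. Again dominate: put $\lambda:=\int_{s_0}^{s_1}\mu_s\,\dd s$ (finite by continuity of $s\mapsto\mu_s(\Omega)$), write $\mu_s=\rho_s\lambda$, and read \eqref{eq:GrowthEqn} with test functions $\eta(s)\bm1_A(\omega)$ as the statement that for $\lambda$-a.e.\ $\omega$ the scalar function $s\mapsto\rho_s(\omega)$ is, in the distributional sense on $]s_0,s_1[$, a solution of $\dot\rho_s(\omega)=\xi(s,\omega)\rho_s(\omega)$ with $\int_{s_0}^{s_1}\xi(s,\omega)^2\rho_s(\omega)\,\dd s<\infty$ for $\lambda$-a.e.\ $\omega$ (by Fubini applied to the finite measure $\mu_{[s_0,s_1]}$). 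For such a pointwise-in-$\omega$ ODE, setting $u_s(\omega):=\sqrt{\rho_s(\omega)}$ one checks that $s\mapsto u_s(\omega)$ is absolutely continuous with $\dot u_s(\omega)=\tfrac12\xi(s,\omega)u_s(\omega)$ and $\int|\dot u_s(\omega)|^2\,\dd s=\tfrac14\int\xi^2\rho_s\,\dd s<\infty$; the only subtle point is the behaviour where $\rho_s(\omega)$ vanishes, handled by noting that $\log\rho_s(\omega)$ is locally absolutely continuous on $\{\rho>0\}$ with derivative $\xi$, whence $\rho_s(\omega)$ can reach $0$ only at the endpoints of a maximal interval and $u_s(\omega)$ extends by continuity with the stated $\rmL^2$ derivative. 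Then $t\mapsto u_t$ is a curve in $\rmL^2(\Omega,\lambda)$ with $u_t-u_r=\int_r^t\dot u_s\,\dd s$ (Bochner, after checking measurability in $(s,\omega)$ and the integrability $\int_{s_0}^{s_1}\|\dot u_s\|_{\rmL^2(\Omega,\lambda)}\,\dd s<\infty$ via Cauchy--Schwarz and $\lambda(\Omega)<\infty$), so it is $2$-absolutely continuous; via the Hilbert embedding this gives $2$-absolute continuity of $\mu$ in $(\MM,\He)$ and, exactly as in Part~(A), the identity \eqref{eq:SpeedChar}.

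\textbf{Main obstacle.} The routine parts are the parabola-to-Hilbert translation and the integration by parts; the delicate point in both directions is the \emph{change of support}, i.e.\ reconciling $\xi_s=2\dot u_s/u_s$ with the set $\{u_s=0\}$ where $u_s$ need not be differentiable in the naive sense. In (A) one must argue that $\dot u_s=0$ $\lambda$-a.e.\ on $\{u_s=0\}$ (true since a Sobolev/Bochner-differentiable $\rmL^2$-function has a.e.-vanishing derivative on its zero set), so that $\xi_s$ is well-defined and the product $\xi_s\rho_s=2u_s\dot u_s$ is unambiguous; in (B) one must propagate positivity forward and backward along the ODE so that the zero set of $\rho_\cdot(\omega)$ is an interval and $u_\cdot(\omega)$ is genuinely in $\rmW^{1,2}$ in time. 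The explicit example with Diracs in the excerpt shows this is exactly the phenomenon that makes the naive exponential formula $\mu_s=\exp(\int_0^s\xi)\,\mu_0$ fail, so some care here is unavoidable; everything else reduces, via the Hilbert embedding, to standard facts about absolutely continuous curves in Hilbert spaces.
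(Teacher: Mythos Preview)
Your proposal is correct, and for Part (B) it is essentially the same argument as the paper's: the paper disintegrates $\mu_I$ as $\dd\nu_\omega(s)\,\dd\ol\mu(\omega)$ with $\ol\mu=\int_I\mu_s\,\dd s$, shows $\nu_\omega=n_\omega\,\dd s$ with $n_\omega\in\rmW^{1,1}$ solving $n'_\omega=\xi_\omega n_\omega$, and then passes to $h=\sqrt{n_\omega}$ exactly as you do with $u_s=\sqrt{\rho_s}$. Your $\lambda$ is the paper's $\ol\mu$ and your $\rho_\cdot(\omega)$ is its $n_\omega(\cdot)$.

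For Part (A), however, your route is genuinely different and arguably simpler. The paper does \emph{not} use the Hilbert embedding; instead it mimics \cite[Thm.\,8.3.1]{AmGiSa05GFMS}: it builds the linear functional $L\varphi=-\int_Q\pl_s\varphi\,\dd\mu_I$ on a dense subspace of $\rmL^2(Q,\mu_I)$, bounds it via the difference-quotient estimate and the metric speed, and obtains $\xi$ by Riesz representation. This yields only the inequality $\frac\sigma2\|\xi_s\|_{\rmL^2(\mu_s)}\leq|\dot\mu|_\He(s)$, and the reverse inequality is then imported from Part (B). Your argument, by contrast, exploits that $s\mapsto u_s=\sqrt{\rho_s}$ is a $2$-absolutely continuous curve in the genuine Hilbert space $\rmL^2(\Omega,\lambda)$ (hence Bochner-differentiable a.e.\ by the Radon--Nikod\'ym property), reads off $\xi_s=2\dot u_s/u_s$ directly, and obtains the \emph{equality} in \eqref{eq:SpeedChar} in one stroke. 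What the paper's approach buys is modularity: the Riesz argument does not rely on an isometric embedding into a linear space and transfers to settings (e.g.\ Hellinger--Kantorovich) where no such embedding exists. What your approach buys is brevity and transparency.

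One small caution: your justification that $\dot u_s=0$ $\lambda$-a.e.\ on $\{u_s=0\}$ should be spelled out via the identification of $\rmW^{1,2}(I;\rmL^2(\Omega,\lambda))$ with functions in $\rmL^2(I{\times}\Omega)$ having $\pl_s$-derivative in $\rmL^2$, followed by Fubini to get $u(\cdot,\omega)\in\rmW^{1,2}(I)$ for $\lambda$-a.e.\ $\omega$, and then the one-dimensional fact that a nonnegative $\rmW^{1,2}$ function has a.e.\ vanishing derivative on its zero set. In (B), the phrase ``$\rho$ can reach $0$ only at the endpoints of a maximal interval'' is not the operative fact; what you actually need (and what both you and the paper use) is that $h=\sqrt{\rho}$ is absolutely continuous on the closure of each component of $\{\rho>0\}$ with $(h')^2=\tfrac14\xi^2\rho\in\rmL^1$, from which global absolute continuity of $h$ across the zero set follows.
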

Before going into the proof of the result we emphasize that relation
\eqref{eq:SpeedChar} features Hellinger's quadratic form
\eqref{eq:I.QuadrForm}. From (the weak form 
of) the) growth equation $\pl_s \mu_s=\xi_s \mu_s$  we have 
\[
\xi_s = \frac{\rmd (\pl_s \mu_s)}{\dd \mu_s} \ \ \text{ and \eqref{eq:SpeedChar}
  means } \ \ \big(|\dot\mu|_\He(s)\big)^2 = \frac{\sigma^2}4 \int_\Omega \Big(
\frac{\rmd (\pl_s \mu_s)}{\dd \mu_s} \Big)^2\dd\mu_s= \bfg_{\mu_s}\big(\pl_s
\mu_s, \pl_s \mu_s \big) .
\]
\begin{proof} To simplify notation we only consider the case $[s_0,s_1]=[0,1]=:I$. 

\underline{\em Proof of part (A):} We proceed in analogy to
\cite[Thm.\,8.3.1]{AmGiSa05GFMS}.  

For simplicity, we set $\mathscr V:=\rmL^2(Q;\mu_I)$ where $Q=I\ti \Omega$ and
$\mu_I=\mu_{[0,1]}$. Moreover, we define the dense subset 
\[
V=  \bigset{ (t,\omega) \mapsto \sum_{i=1}^{n} \eta_i(t) \bm1_{A_i}(\omega)} 
    {n\in \N,\ \eta_i\in \rmC^1(I),\ \eta_i(0)=0=\eta_i(1), \ A_i \in \mathfrak A}.
\]
On $V$ we define the linear mapping $L:V \to \R$ via 
\[
L\varphi = -\int_Q \pl_s \varphi(s,\omega) \dd \mu_I(s,\omega) = -\int_I \int_\Omega
\pl_s\varphi(s,\omega) \dd \mu_s(\omega) \dd s. 
\]
We now want to show that $L$ can be extended continuously  on all of $\mathscr
V$. We define $S\subset [0,1]$ to be the set of those $s$ where
$|\dot\mu|_\He(s)$ exist. Then, for all measurable and bounded $g:\Omega
\to \R$ we have 
\[
\big(\langle g,\mu_{s+h}\rangle - \langle g,\mu_s\rangle\big) = 
\int_\Omega g (1{-}2\theta_h) \dd \lambda_h \quad \text{where
}\lambda_h=\mu_{s+h} + \mu_s, 
\]
$\gamma_s = \theta_h \lambda_h$,  and $\gamma_{s+h}=(1{-}\theta_h)\lambda_h$.  
Using Cauchy-Schwarz' estimate and \eqref{eq:ElemEstim} we find 
\[
\frac1h \big|\langle g,\mu_{s+h}\rangle - \langle g,\mu_s\rangle\big| \leq 
\big\| g\|_{\rmL^2(\Omega,\lambda_h) }  \frac{\sqrt2}\sigma 
\frac{\He(\mu_s,\mu_{s+h})}{h} .
\]
We now assume $s\in S$ and use that $\lambda_h \to 2 \mu_s$ (by strong
continuity of $t \mapsto \mu_t$. Thus we have 
\begin{equation}
  \label{eq:Eqn123}
  \limsup_{h\searrow 0} \frac1h \big|\langle g,\mu_{s+h}\rangle - \langle
g,\mu_s\rangle\big| \leq    \big\| g\|_{\rmL^2(\Omega,\mu_s)} \: 
\frac2\sigma |\dot\mu|_\He(s). 
\end{equation}
Now we consider a general $\varphi \in V$ and extend it (continuously!) by
$0$. Moreover, extend $s\mapsto \mu(s)  $ by $\mu(1)$ for $ s\geq 1$. 
Then, we have   
\begin{align*}
\int_Q  \pl_s \varphi \dd \mu_I & = \lim_{h\searrow 0} \int_Q
\frac1h\big(\varphi(s,\omega) {-} \varphi(s{-}h,,\omega)\big) \dd \mu_I 
\\
&= \lim_{h\searrow 0}\bigg(\int_I\frac{\langle \varphi_s,\mu_s\rangle {-} \langle
\varphi_s,\mu_{s+h} \rangle}h  \dd s - \frac1h\int_0^h \!\!\langle
\varphi_{s-h},\mu_s\rangle \dd s + \frac1h \int_{1-h}^1 \!\!\langle
\varphi_s,\mu_{s+h}\rangle \bigg) .
\end{align*}
Because $\varphi_0=0=\varphi_1$ the last two terms vanish with $h \searrow 0$. Hence,
together with \eqref{eq:Eqn123}, we find 
\begin{equation}
  \label{eq:L.bounded}
  \big| L(\varphi)\big| = \bigg|\int_Q  \pl_s \varphi \dd \mu_I \bigg| \leq
\int_I \big\| \varphi_s \big\|_{\rmL^2(\Omega,\mu_s)} \frac2\sigma
|\dot\mu|_\He(s) \dd s . 
\end{equation}
By assumption $s\mapsto |\dot\mu|_\He(s)$ lies in $\rmL^2(I)$, hence $L$ can be
extended continuously to $\mathscr V$. By Riesz' representation theorem for the
Hilbert space $\mathscr V$, there
exists $\xi\in \mathscr V$ such that $L(\varphi)= \int_Q \xi \varphi \dd
\mu_I$, but this shows that $(\mu,\xi)$ solve the growth equation
\eqref{eq:GrowthEqn}. 

Moreover, take $\eta=\bm1_{[s_0,s_1]}$, then using \eqref{eq:L.bounded} we find 
\begin{align*}
\int_{s_0}^{s_1}\!  \int_\Omega \xi_s^2\dd \mu_s \dd s &=\int_Q \eta \xi^2
\dd\mu_I = L(\eta\xi) 
\overset{\text{\eqref{eq:L.bounded}}}{\leq}  \frac2\sigma \int_Q \eta
\|\xi_s\|_{\rmL^2(\Omega,\mu_s)} |\dot\mu|_\He(s) \dd s
\\
& \leq \frac2\sigma  \bigg(
\int_{s_0}^{s_1}\!  \int_\Omega \xi_s^2\dd \mu_s \dd s \bigg)^{1/2}
 \bigg( \int_{s_0}^{s_1}\!\big( |\dot\mu|_\He(s) \big)^2\dd s \bigg)^{1/2}.
\end{align*}
Since $s_0$ and $s_1$ with $0\leq s_0 < s_1\leq 1$ are arbitrary we conclude 
\[
\int_\Omega \xi_s^2 \dd \mu_s \leq \frac4{\sigma^2}   \big( |\dot\mu|_\He(s)
\big)^2 \quad \text{for a.a. } s \in [0,1]. 
\]
Thus, we have established \eqref{eq:SpeedChar} with ``$\geq$'' instead of
``$=$''.   The opposite inequality will be shown via part (B).\medskip

\underline{\em Proof of part (B):}  The measure $\mu_I \in \mathfrak (Q)$ has
two disintegration with respect to $Q=I\ti \Omega$, namely into $\dd \mu_i =
\dd\mu_s(\omega) \dd s$ and $\dd \mu_I= \dd \nu_\omega(s)\dd \ol\mu(\omega)$,
where $\ol\mu\in \MM$ and $\nu_\omega \in \mathfrak P([0,1])$ for
$\ol\mu$-a.a.\ $\omega \in \Omega$.  Here $\ol\mu(A)=\mu_I(I\ti A)$ or $\ol\mu=
\int_0^1 \gamma_s\dd s$ (Bochner integral). From $\xi \in \rmL^2(Q;\mu_I)$ we
have $\xi_\omega:=\xi(\cdot,\omega) \in \rmL^2([0,1],\nu_\omega)$ \,$\ol\mu$-a.e.\ in
$\Omega$. Testing weak growth equation \eqref{eq:GrowthEqn} with
$\varphi(s,\omega) = \eta(s) \bm1_A(\omega)$ we find 
\[
\int_A \int_I \big( \eta'(s) + \xi(s,\omega) \dd \nu_\omega(s) \dd
\ol\mu(\omega) = 0  \quad \text{for all } \eta\in \rmC^1_0([0,1]) 
\text{ and } A \in \mathfrak A. 
\]
As $A\in \mathfrak A$ is arbitrary, we conclude 
\[
\Big( \: \forall\,\eta\in \rmC^1_0([0,1]): \  \int_I \big( \eta'(s) +
\xi_\omega(s) \big) \dd \nu_\omega(s) =0 \: \Big)\quad
 \ol\mu\text{-a.e.\ in } \Omega.
\]
Thus, we have reduced the problem in $\MM$ to a pointwise problem scalar
problem.

  
From $\xi_\omega \in \rmL^2([0,1],\nu_\omega)$ we see that
$\zeta_\omega: =\xi_\omega \nu_\omega $ is a signed measure on $[0,1]$ and we
have $\pl_s \nu_\omega=\zeta_\omega$ in the distributional sense. Hence,
$\nu_\omega$ is absolutely continuous with respect to $\dd s$ with
$\nu_\omega=n_\omega \dd s $ and $n_\omega \in \rmB\rmV([0,1])$. Inserting this
once again into the weak equation we find $n_\omega\in \rmW^{1,1}([0,1])$ with
$n'_\omega(s)= \xi_\omega(s) n_\omega(s)$ a.e.\ in $[0,1]$.

Omitting the subscript for the moment, we set $h(s)=\sqrt{n(s)}\geq 0$ and find 
$2h h'=\xi h^2$, which implies either $h=0$ or $h'=\frac12\xi h$. Since $h(s)=0$
and $h\geq 0$ imply $h'(s)=0$ a.e., we obtain 
\[
h'(s)^2 = \frac14\xi(s)^2 h(s)=  \frac14\xi(s)^2 n(s)
 \quad \text{a.e. in }[0,1]. 
\]
Thus, we have 
\begin{equation}
  \label{eq:350}
\begin{aligned}
\big( \sqrt{n_\omega(1)} - \sqrt{n_\omega(0)}\big)^2& =  \big(h(1){-}
h(0)\big)^2 = \bigg(\int_I h'(s)\dd s\bigg)^2\\
& \leq \int_I\big( h'(s)\big)^2\dd s 
  = \int_I \frac14\xi_\omega(s)^2 n_\omega(s) \dd s   
\end{aligned}
\end{equation}
Noting that $\dd\mu_s=n_\omega(s)\ol\mu$ we can integrate the this estimate and
arrive at
\begin{align}
\label{eq:355}
\begin{aligned}
\He(\mu_0,\mu_1)^2 &= \sigma^2\int_\Omega (\sqrt{n_\omega(1)}- 
\sqrt{n_\omega(0)}\big)^2 \dd \ol\mu(\omega) 
 \leq \frac{\sigma^2}4 \int_\Omega \int_I \xi_\omega^2 n_\omega\dd s \dd \ol\mu 
\\
& = \frac{\sigma^2}4\int_I g(s)^2  \dd s \quad \text{with } 
  g(s): = \| \xi_s\|_{\rmL^2(\Omega,\gamma_s)} .
\end{aligned}
\end{align}

The same estimate can be done on each subinterval $[r,t]\subset [0,T] $ giving 
\begin{align*}
\He(\mu_r,\mu_t) &\leq \frac{\sigma}2\:(t{-}r) \bigg(\frac1{t{-}r} \int_{[r,t]}g(s)^2 \dd
 s\bigg)^{1/2}. 
\end{align*}
where the factors $(t{-}r)$ disappear because the interval $[0,1]$ in 
\eqref{eq:355} needs to be rescaled to $[r,t]$. Defining the partition points
$s_i=r+i(t{-}r)/N$ for $i=0,1,\ldots,N$ and the piecewise constant function  
\[
G_N(s)= \sum_{i=1}^N \frac1{s_i{-}s_{i-1}} \int_{s_{i-1}}^{s_i} g(r)^2 \dd r
\,\bm1_{[s_{i-1}, s_i]}(s) 
\]
we find $G_N\to g^2$ in $\rmL^2([r,t])$ and $\sqrt{G_N}\to g$ in
$\rmL^2([0,1])$. With this we have 
\begin{align*}
\He(\mu_r,\mu_t) & \leq \sum_{i=1}^N \He(\mu_{s_{i-1}}, \mu_{s_i}) 
    \leq\frac{\sigma}2\: \sum_{i=1}^N  (s_{i}{-}s_{i-1})\sqrt{ G_N(s_{i-1/2})}
\\
& = \frac{\sigma}2 \int_r^t \sqrt{G_N(s)}
\dd s \to \frac{\sigma}2\int_r^t g(s) \dd s \text{ for } N\to \infty. 
\end{align*}
This shows that $\mu:[0,1]\to \MM$ is $2$-absolutely continuous and
$|\dot\mu|_\He(s) \leq \frac{\sigma}2\,g(s)$ a.e.\ in $[0,1]$, which is the
``$\leq $'' part of \eqref{eq:SpeedChar}. 

With this Theorem \ref{th:AbsContCurve} is established. 
\end{proof}

\subsection{Geodesics curves}

According to \cite{Kaku48EIPM} (using the choice $\sigma=1$) the Hellinger
distance is given by  
\[
\He(\mu_0,\mu_1)^2 = \sigma^2\big(\mu_0(\Omega) + \mu_1(\Omega) - 2
\sqrt{\mu_0\mu_1}(\Omega)\big). 
\]
One importance starting point of this paper is that this distance is a geodesic
distance. For each pair $(\mu_0,\mu_1) \in \MM^2$ there exists a unique
constant-speed geodesic (simply called geodesic in the sequel), i.e.\ a curve
$\gamma:[0,1] \to \MM$ such that
\[
\gamma(0)=\mu_0,\ \ \gamma(1)=\mu_1, \ \ 
\He\big(\gamma(s),\gamma(t)\big) = |s{-}t|\,
\He (\mu_0,\mu_1) \text{ for all } s,t\in [0,1]. 
\]
This geodesic is given by 
\begin{align*}
\gamma^\He_{\mu_0\to\mu_1}(s)
&= \Big( \big( (1{-}s)\big(\tdfrac{\rmd \mu_0}{\rmd \lambda}\big)^{1/2} + 
s \,\big(\tdfrac{\rmd \mu_1}{\rmd \lambda}\big)^{1/2}\Big)^2 \lambda 
\\
& = (1{-}s)^2\mu_0 + s^2 \mu_1 +2(s{-}s^2)
\sqrt{\mu_0\mu_1} \\
& = (1{-}s)\mu_0 + s \mu_1 -(s{-}s^2) \tdfrac1{\sigma^2} \He(\mu_0,\mu_1)^2
   \ \text{ for } s\in [0,1],
\end{align*}
where $\lambda \in \MM$ is arbitrary as long as $\mu_0,\mu_1 \ll \lambda$. 

The growth equation along a geodesic can also be given explicitly, namely
\begin{subequations}
\label{eq:Geod} 
\begin{equation}
  \label{eq:Geod.a}
  \pl_s \gamma^\He_{\mu_0\to\mu_1}(s) = \xi(s,\cdot)
   \gamma^\He_{\mu_0\to\mu_1}(s) \quad  \text{with } \xi(s,\omega)
   = \frac{2\,(f_1(\omega)-f_0(\omega)}{ (1{-}s)f_0(\omega)+ s f_1(\omega) } ,
\end{equation}
where $f_j= \big(\dd \mu_j/\dd \lambda \big)^{1/2}$. Note that $\xi$ satisfies
the equation
\begin{equation}
  \label{eq:Geod.b}
  \pl_s \xi(s,\omega) +\frac12 \xi(s,\omega)^2=0.  
\end{equation}
\end{subequations}
which is completely independent of $\mu_0$ and $\mu_1$. The system
\eqref{eq:Geod} form the geodesic equations, which are a special case of the
equations for the Hellinger-Kantorovich geodesics derived in
\cite[Eqn.\,(5.1)]{LiMiSa16OTCR} and \cite[Eqn.\,(8.72)]{LiMiSa18OETP}. 

An important feature is that along the geodesics the total mass of the measure
is exactly quadratic, namely 
\begin{equation}
  \label{eq:MassQuad}
\begin{aligned}  \gamma^\He_{\mu_0\to\mu_1}(s)(\Omega) &
= (1{-}s)^2 \mu_0(s) + s^2 \mu_1(\Omega) +
  2(s{-}s^2)\sqrt{\mu_0\mu_1}(\Omega)\\
& =  (1{-}s) \mu_0(\Omega) + s
  \mu_1(\Omega) -(s{-}s^2) \tdfrac1{\sigma^2} \He(\mu_0,\mu_1)^2. 
\end{aligned}
\end{equation}
In particular, one can define a Hellinger average by taking the midpoint of the
geodesics:
\begin{equation}
  \label{eq:HellMidpoint}
  A^\He(\mu_0,\mu_1) := \gamma^\He_{\mu_0\to\mu_1}(1/2) = \frac14 \mu_0 +
  \frac14\mu_1 + \frac12 \sqrt{\mu_0\mu_1}. 
\end{equation}

\subsection{Properties similar to Hilbert-space geometry} 
\label{su:HellHilbert}

The last form of the geodesic already indicates that $(\MM,\He)$ is somehow
related to the positive cone in the Hilbert space
$\rmL^2(\Omega,\lambda)$. However, here the measure $\lambda$ depends on the
measures $\mu_j$ that are relevant for the current construction. This embedding
is already included in \cite[Sec.\,4]{Kaku48EIPM}, see Remark \ref{re:KakuEmbed}.

(H1) We have a counterpart to the \emph{parallelogram identity} for all
$\mu_0,\mu_1,\mu_2\in \MM$: 
\begin{equation}
  \label{eq:H1Parallel}
2 \He(\mu_0,\mu_1)^2 + 2
\He(\mu_0,\mu_2)^2 = \He(\mu_1,\mu_2)^2 + 4 \,\He\big( \mu_0,
A^\He(\mu_1,\mu_2)\big)^2,
\end{equation}
where $A$ denotes the average defined in \eqref{eq:HellMidpoint}. See Figure
\ref{fig:Parallel} for a visualization. 
\begin{figure}[h]
\begin{minipage}{0.45\textwidth}
\begin{tikzpicture}
\draw[ultra thick] (0,0).. controls (2,0.3) .. (4.5,0);
\draw[ultra thick] (0,0).. controls (0.6,1) .. (1.4,2);
\draw[ultra thick] (0,0).. controls (1.6,0.6) .. (3.2,1.1);
\draw[ultra thick] (4.5,0).. controls (3.5,1) .. (1.4,2);
\draw[thick, dashed] (0,0).. controls (3.8,1.4) .. (7,1.6);
\draw[thick, dashed] (4.5,0).. controls (5.8,1) .. (7,1.6);
\draw[thick, dashed] (1.4,2).. controls (4.2,2) .. (7,1.6);
\fill (0,0) circle (0.3em) node[below] {$\mu_0^{^{}}$}; 
\fill (4.5,0) circle (0.3em) node[below] {$\mu_1^{^{}}$}; 
\fill (1.4,2) circle (0.3em) node[left] {$\mu_2\;$}; 
\fill (3.2,1.1) circle (0.3em) node[above] {$\mu_A$\rule[-0.4em]{0em}{0em}}; 
\end{tikzpicture}
\end{minipage}
\hfill
\begin{minipage}{0.5\textwidth}
\caption{A visualization of the parallelogram identity in $(\MM,\He)$ where all
  edges (full lines) are geodesics, while the broken lines contain curves that
  may lead to signed measures lying in $\rmL^2(\Omega,\lambda)$. The center of
  the parallelogram is $\mu_A=A^\He(\mu_1,\mu_2)$.}
\label{fig:Parallel}
\end{minipage}
\end{figure}
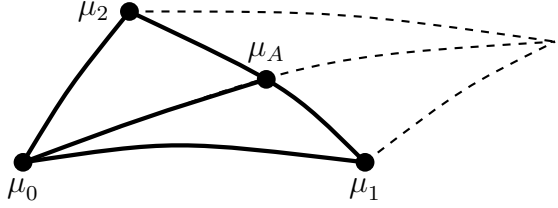 

(H2) A second instance occurs when looking at the \emph{squared distance} along
geodesic curves. For every three points $\mu_0, \mu_1, \eta \in \MM$ we have
\emph{geodesic $2$-convexity} as well as \emph{geodesic $2$-concavity}, i.e.\ for all $ s\in
[0,1] $ we have 
\begin{equation}
  \label{eq:2CvxCcv}
\He\big(\gamma^\He_{\mu_0\to\mu_1}(s), \eta\big)^2 = (1{-}s) \He(\mu_0,\eta)^2 
 + s \He(\mu_1,\eta)^2 -2\, \frac{s{-}s^2}2  \He(\mu_0,\mu_1)^2 . 
\end{equation}

(H3) Finally, we may define \emph{angles} between geodesics emanating from a
point $\mu_0$. Setting $\gamma_j(s) = \gamma^\He_{\mu_0\to \mu_j}(s)$ for $j=1,2$ the
angle between geodesics is defined in the sense of geodesic spaces 
(see \cite{BuBuIv01CMG, LasMie19GPCA}) via 
\[
\sphericalangle(\gamma_1,\gamma_2):= \arccos \Big(\lim_{s,t\SEarrow0}
\frac{\He(\mu_0,\gamma_1(s))^2 {+} \He(\mu_0,\gamma_2(t))^2 {-}
  \He(\gamma_1(s),\gamma_2(t))^2} {2 \, \He(\mu_0,\gamma_1(s))
  \, \He(\mu_0,\gamma_2(t)) }\Big) \ \in \ [0,\pi]
\] 
whenever it exists. Exploiting the quadratic formula \eqref{eq:2CvxCcv} a
straightforward calculation shows that the fraction in the above definition is
indeed constant (as in Hilbert spaces) and we find 
\begin{equation}
  \label{eq:AngleHell}
  \sphericalangle(\gamma_1,\gamma_2) =  \arccos \Big( 
  \frac{\He(\mu_0,\mu_1)^2 {+} \He(\mu_0,\mu_2)^2 {-} \He(\mu_1,\mu_2)^2 }
   {2 \,\He(\mu_0,\mu_1)\, \He(\mu_0,\mu_2)}\Big).
\end{equation}
Clearly this is the same formula as in planar geometry, which is valid in all
Hilbert spaces.

\begin{remark}[Embedding into Euclidean space] \slshape
\label{re:KakuEmbed}
The observation that certain
  subsets of $\MM$ can be embedded into a Euclidean (Hilbert) space was
  crucial for the work in \cite{Kaku48EIPM}. There Section 4 is entitled
  ``\emph{Embedding into Euclidean space}''. Moreover, the introduction
contains the following text:
\begin{quote}\itshape
The results of this paper were much amplified and the arguments used below
much simplified, thanks to certain suggestions kindly made by Professor John
von Neumann. In particular, the introduction of inner product and isometric
embedding of $\MM$ into a general Euclidean space, as well as the indication
of relationship of this paper with earlier works of E. Hellinger, as are discussed
in \S4, are due to Professor J. von Neumann. For all these I wish to express my
heartiest thanks.
\end{quote} 
Indeed, given any $\lambda\in \MM$ the subspace $\rmL^1(\Omega,\lambda)\subset
\MM $ equipped with the Hellinger distance can be embedded isometrically into
the Hilbert space $\rmL^2(\Omega,\lambda)$ via the mapping $ g\lambda \mapsto
\sqrt{g}\,\lambda$ (using the normalization $\sigma=1$). 
\end{remark}
 
\subsection{Hellinger distance for product measures}
\label{su:HellProduct}

Already the paper  \cite{Kaku48EIPM} shows that the Hellinger integral is very
useful when studying product measures. Assuming $\Omega=\Omega_1\ti \Omega_2$
with associated $\sigma$-algebras we can consider measures $\nu_1\in
\MMM{1}$ and $\nu_2 \in \MMM{2}$ and define the product measure $\mu=\nu_1\oti
\nu_2$ as the unique measure $\mu\in \MM$ satisfying
\[
\nu_1\oti \nu_2\big( A_1\ti A_2) = \nu_1(A_1) \,\nu_2(A_2). 
\]
It now follows easily from the definition of the Hellinger integral that it is
compatible with the product structure in the sense that 
\begin{equation}
  \label{eq:HellIntProd}
  \sqrt{(\nu_1\oti \nu_2) \,(\eta_1\oti \eta_2)} = \sqrt{\mu_1\eta_1} \otimes 
  \sqrt{\nu_2\eta_2} .
\end{equation}
Indeed, this relation and its generalization to infinite product measures is
the basis of the analysis in \cite{Kaku48EIPM}.  

From this we can derive a corresponding formula for the Hellinger distance
between two product measures, namely 
\begin{equation}
  \label{eq:HellDistProd}
  \He\big(\nu_1\oti \nu_2, \eta_1\oti \eta_2\big)^2 = \sigma^2\big(
 \ol\nu_1\ol\nu_2 + \ol\eta_1\ol\eta_2 \big) -  \frac{\sigma^2}2 
\big(\ol\nu_1{+}\ol\eta_1{-}\tdfrac1{\sigma^2} \He(\nu_1,\eta_1)^2 \big) 
\big(\ol\nu_2{+}\ol\eta_2 {-} \tdfrac1{\sigma^2} \He(\nu_2,\eta_2)^2 \big) , 
\end{equation}
where we abbreviated $\ol\nu_j=\nu_j(\Omega_j)$ and
$\ol\eta_j=\eta_j(\Omega_j)$. We see that the Hellinger distance of the product
measures can be expressed solely in therms of the total masses of the
individual measures and the Hellinger distances between the corresponding
factors. 

The above result takes a much simpler form if we restrict to probability
measures $\nu_1,\eta_1\in \PPP1$ and $\nu_2,\eta_2\in \PPP2$:
\begin{equation}
  \label{eq:HellProbProd}
  \He\big(\nu_1\oti \nu_2, \eta_1\oti \eta_2\big)^2 = \He(\nu_1,\eta_1)^2 +
  \He(\nu_2,\eta_2)^2 -  \tdfrac1{2\sigma^2} 
                \He(\nu_1,\eta_1)^2\He(\nu_2,\eta_2)^2. 
\end{equation}

We already emphasize at this point that in general a geodesic curve between two
product measures does not stay a product measure any more. It is rather a
convex combination of three product measures, namely
\[
\gamma^\He_{\nu_0\oti\eta_0\to \nu_1\oti\eta_1}(s) = (1{-}s)^2 \nu_0\oti\eta_0
+ s^2  \nu_1\oti\eta_1 + 2(s{-}s^2) \sqrt{\nu_0\eta_0}\oti \sqrt{\nu_1\eta_1}. 
\]
To find the shortest connecting path consisting of product measures will be
treated as a special case of a Fisher-Rao distance in Section \eqref{su:ProdMeas}.

The case of product measures with more than two factors (as in
\cite{Kaku48EIPM}) works analogously: For 
$\nu_j,\eta_j \in \PPP{j}$ we have 
\begin{equation}
  \label{eq:MultiProducts}
\begin{aligned}
1 {-}\tdfrac1{2\sigma^2} \,\He\big( \mathop{\oti}_{i=1}^n \nu_i, 
     \mathop{\oti}_{j=1}^n \eta_j\big)^2
&  = \sqrt{(\oti \nu_i)  (\oti \eta_j)} \big( 
      \mathop{\ti}\limits_{k=1}^n\Omega_k\big)
\\
& = \mathop{\Pi}\limits_{k=1}^n \!\sqrt{\nu_k\eta_k}(\Omega_k) 
 = \mathop{\Pi}\limits_{k=1}^n \!\big( 1{-} \tdfrac1{2\sigma^2} 
\He(\nu_k,\eta_k)^2\big) . 
\end{aligned}
\end{equation}

\subsection{Invariance under pushforwards of the Hellinger distance}
\label{su:Pushforward}

Considering two measure spaces $(\Omega,\mathfrak A)$ and $(\Sigma,\mathfrak
B)$ and a measurable mapping $\Phi:\Omega\to \Sigma$, the pushforward $\Phi_\#
\mu \in \mathfrak M(\Sigma)$ of $\mu$ is defined via 
\[
\Phi_\#\mu (B) := \mu\big(\Phi^{-1}(B)\big) \ \text{ for all } B \in \mathfrak B
\text{ and }\mu \in \MM, 
\]
where $\Phi^{-1}(B):= \bigset{ x\in \Omega}{ \Phi(x) \in B} \subset \mathfrak A$
such that $\Phi$ does not need to be injective. See
\cite[Sec.\,5.2]{AmGiSa05GFMS} for this and further properties of pushforward
measures.  

Using the infimum characterization \eqref{eq:HellDistInfim}  of
$\sqrt{\mu_0\mu_1}$ and comparing the admissible partitions for the two sides 
we easily find
\[
   \Phi_\#\big(\sqrt{\mu_0\mu_1}\big) (B) \leq 
 \sqrt{(\Phi_\#\mu_0)(\Phi_\#\mu_1)}(B) \ \text{ for all } B \in \mathfrak B.
\]
Clearly, we have $\Phi_\# \mu_j(\Sigma)  =  \mu_j(\Omega)$, hence we
immediately find the monotonicity of the Hellinger distance under 
pushforward, with equality if the operation can be reversed. 

\begin{lemma}[Hellinger distance and pushforward]
\label{le:Pushforward}
For measure spaces $(\Omega,\mathfrak A)$ and $(\Sigma,\mathfrak
B)$ and a measurable mapping $\Phi:\Omega\to \Sigma$ we have 
\begin{equation}
  \label{eq:HePushforward}
  \forall\, \mu_0,\mu_1\in \MM\colon \
  \He_\Sigma\big(\Phi_\#\mu_0,\Phi_\#\mu_1 \big)  \leq \He_\Omega(\mu_0,\mu_1).
\end{equation}
If additionally $\Psi$ is one-to-one with measurable inverse $\Psi^{-1}$, then
\begin{equation}
  \label{eq:HePushforwInvert}
  \forall\, \mu_0,\mu_1\in \MM\colon \
  \He_\Sigma\big(\Psi_\#\mu_0,\Psi_\#\mu_1 \big) = \He_\Omega(\mu_0,\mu_1).
\end{equation}
\end{lemma}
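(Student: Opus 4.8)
The plan is to derive both statements directly from the infimum characterization \eqref{eq:HellDistInfim} of the Hellinger integral $\sqrt{\mu_0\mu_1}$, exactly as sketched in the paragraph preceding the lemma. First I would fix $\mu_0,\mu_1\in\MM$ and a set $B\in\mathfrak B$, and compare the admissible partitions appearing on the two sides of the claimed inequality $\Phi_\#(\sqrt{\mu_0\mu_1})(B)\le\sqrt{(\Phi_\#\mu_0)(\Phi_\#\mu_1)}(B)$. Any partition $B=\bigcup_{i\in I}B_i$ into disjoint measurable sets in $\mathfrak B$ pulls back to a partition $\Phi^{-1}(B)=\bigcup_{i\in I}\Phi^{-1}(B_i)$ into disjoint measurable sets in $\mathfrak A$; since $(\Phi_\#\mu_j)(B_i)=\mu_j(\Phi^{-1}(B_i))$, the sum $\sum_i (\Phi_\#\mu_0)(B_i)^{1/2}(\Phi_\#\mu_1)(B_i)^{1/2}$ is exactly one of the sums over which the infimum defining $\Phi_\#(\sqrt{\mu_0\mu_1})(B)=\sqrt{\mu_0\mu_1}(\Phi^{-1}(B))$ is taken (namely the one induced by the partition $\{\Phi^{-1}(B_i)\}$). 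Taking the infimum over all such $B$-partitions on the left and noting the left-hand infimum is over a potentially larger family of $\Phi^{-1}(B)$-partitions yields the pointwise inequality of measures.

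Next I would specialize to $B=\Sigma$ to get $\sqrt{(\Phi_\#\mu_0)(\Phi_\#\mu_1)}(\Sigma)\ge\Phi_\#(\sqrt{\mu_0\mu_1})(\Sigma)=\sqrt{\mu_0\mu_1}(\Omega)$, combine it with the trivial mass identities $\Phi_\#\mu_j(\Sigma)=\mu_j(\Omega)$, and plug everything into the closed-form expression $\He(\mu_0,\mu_1)^2=\sigma^2(\mu_0(\Omega)+\mu_1(\Omega)-2\sqrt{\mu_0\mu_1}(\Omega))$ from \eqref{eq:I.def.Hell}. Since the mass terms match and the geometric-mean term can only decrease under pushforward, the subtracted quantity $2\sqrt{\cdot}(\Sigma)$ can only shrink, so $\He_\Sigma(\Phi_\#\mu_0,\Phi_\#\mu_1)^2\le\He_\Omega(\mu_0,\mu_1)^2$, which is \eqref{eq:HePushforward}.

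For the equality statement \eqref{eq:HePushforwInvert} I would apply the inequality twice. If $\Psi$ is a bijection with measurable inverse $\Psi^{-1}$, then \eqref{eq:HePushforward} applied to $\Psi$ gives $\He_\Sigma(\Psi_\#\mu_0,\Psi_\#\mu_1)\le\He_\Omega(\mu_0,\mu_1)$, while applying it to the measurable map $\Psi^{-1}:\Sigma\to\Omega$ with the measures $\Psi_\#\mu_0,\Psi_\#\mu_1$ gives $\He_\Omega((\Psi^{-1})_\#\Psi_\#\mu_0,(\Psi^{-1})_\#\Psi_\#\mu_1)\le\He_\Sigma(\Psi_\#\mu_0,\Psi_\#\mu_1)$; since $(\Psi^{-1})_\#\Psi_\#=\mathrm{id}_\#$ by functoriality of pushforward, the left side is $\He_\Omega(\mu_0,\mu_1)$, and the two inequalities sandwich to equality.

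The only genuine subtlety — and the step I would write out most carefully — is the comparison of partition families in the first step: one must check that pulling back a $\mathfrak B$-measurable disjoint partition of $B$ along $\Phi$ yields a genuinely $\mathfrak A$-measurable disjoint partition of $\Phi^{-1}(B)$ (disjointness is preserved because $\Phi^{-1}$ commutes with intersections, and measurability because $\Phi$ is measurable), and to be clear about the direction of the inequality — the left-hand infimum ranges over \emph{all} partitions of $\Phi^{-1}(B)$, not merely the pulled-back ones, hence it is $\le$ the infimum over the smaller family realized on the right. Everything else is the routine substitution into \eqref{eq:I.def.Hell} and the two-sided use of monotonicity; no measure-theoretic regularity beyond finiteness of the measures is needed.
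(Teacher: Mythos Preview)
Your proposal is correct and follows exactly the paper's approach: the paper derives the lemma by the same partition-comparison argument via the infimum characterization \eqref{eq:HellDistInfim}, the mass identity $\Phi_\#\mu_j(\Sigma)=\mu_j(\Omega)$, and reversibility of $\Psi$ for the equality case. One verbal slip to fix: the Hellinger integral can only \emph{increase} under pushforward (as your displayed inequality $\sqrt{(\Phi_\#\mu_0)(\Phi_\#\mu_1)}(\Sigma)\ge\sqrt{\mu_0\mu_1}(\Omega)$ correctly states), so the subtracted term is \emph{larger}, which is what gives $\He_\Sigma\le\He_\Omega$.
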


In \cite{BaBrMi16UFRM} the case $\Sigma=\Omega$ being a smooth
finite-dimensional manifold without boundary and dimension $\geq 2$ is studied,
and it is shown that the Hellinger distance (called Fisher-Rao metric there)
restricted to probability measures is the only ``Riemannian distance'' that has
the invariance property \eqref{eq:HePushforwInvert}. There the theory is
restricted to smooth densities (instead of measures) and diffeomorphisms.

In \cite[last\,parag.]{BaBrMi16UFRM} it is shown that asking
\eqref{eq:HePushforwInvert} only for smooth diffeomorphisms on $\Omega = \bbS^1$ 
allows for more general distances $\calD$ in $\PP$ than multiplies of $\He$. 
Hence, it would be interesting to know whether enforcing
\eqref{eq:HePushforwInvert}  also for general measurable
homeomorphism rules out this pathology. More generally, one might conjecture
every distance $\calD:\PP\ti \PP \to {[0,\infty[}$ satisfying the invariance
\eqref{eq:HePushforwInvert} and the properties (H1), (H2), and (H3) in Section
\ref{su:HellHilbert} is a multiple of $\He$.

\subsection{Historical remarks} 
\label{su:History}

In his dissertation \cite{Hell07OQFU} and habilitation thesis
\cite{Hell09NBTQ}, Hellinger introduced integrals of the type $\int_a^b
u(t) \frac{\rmd f_1\rmd f_2}{\rmd g}$ for functions $u,f_1,f_2,g,h\in
\rmC^0([a,b])$ where additionally $g$ and $h$ are increasing and satisfy $\big(
f_j(t_2){-}f_j(t_1) \big)^2 \leq \big( g(t_2){-}g(t_1)\big) \big(
h(t_2){-}h(t_1)\big)$ for all $t_1,t_2$ with $a \leq t_1  < t_2  \leq b$.   
In modern terns using  the Radon-Nikod\'ym derivative, we would introduce a
dominating measure $\lambda \in \mathfrak ([a,b])$  and assume 
$\rmd f_j = \phi_j \dd \lambda$, $\rmd g = \gamma \dd \lambda$, and $\rmd
h = \eta \dd \lambda$ with the restriction $\phi_j^2 \leq \gamma\eta$. Then,
$\phi_1\phi_2/\gamma \leq \eta$ a.e.\ with respect to $\lambda$, and 
Hellinger's integral can be interpreted in the form 
\begin{equation}
  \label{eq:IntegralHelli}
   \int_a^b u(t) \frac{\rmd f_1\rmd f_2}{\rmd g} : = 
\int_a^b
u(t) \frac{\phi_1(t)\phi_2(t)}{\gamma(t) } \dd \lambda(t).
\end{equation}
However, as Hellinger's construction was much before the introduction of the
Radon-Nikod\'ym derivative, he used a convexity argument that is reminiscent to
the concavity in the definition of $\sqrt{\mu_1\mu_2}$ in
\eqref{eq:HellDistInfim}: Restricting the integral in \eqref{eq:IntegralHelli}
to the case $u\equiv 1$ and $f=f_1=f_2$ one shows that 
\[
\int_a^b \frac{\rmd f^2}{\rmd g} = \lim_{\Delta(\Pi)\to 0} \sum_{t_i\in \Pi} 
\frac{\big(f(t_i) {-} f(t_{i-1})\big)^2}{g(t_i){-}g(t_{i-1})} = 
\sup_{\Pi\in \mafo{Part}([a,b]}   \sum_{t_i\in \Pi} 
\frac{\big(f(t_i) {-} f(t_{i-1})\big)^2}{g(t_i){-}g(t_{i-1})},
\]
see \cite[\S4, p.\,234]{Hell09NBTQ}, because the discrete sum is increasing
under refinements of the partitions 
$\Pi$ of the interval $[a,b]$, by using the estimate 
\[
\frac{\big(f(t_i) {-} f(t_{i-2})\big)^2}{g(t_i){-}g(t_{i-2})}\leq
\frac{\big(f(t_i) {-} f(t_{i-1})\big)^2}{g(t_i){-}g(t_{i-1})} + 
\frac{\big(f(t_{i-1}) {-} f(t_{i-2})\big)^2}{g(t_{i-1}){-}g(t_{i-2})}.
\]
In \cite{Kolm30UI} the argument was generalized to arbitrary measure spaces, 
defining so-called Kolmogorov integrals by using the infimum construction, but
an explicit reference of Hellinger's work is given \cite[p.\,679]{Kolm30UI},
referring explicitly to \cite[p.234]{Hell09NBTQ}. 
  
Using the modern tool of the Radon-Nikod\'ym derivative,
\cite[Eqn.\,(11)]{Kaku48EIPM} introduces the so-called Hellinger integral
$\rho(\mu,\nu) = \int_\Omega \sqrt{\mu(\rmd\omega)\nu(\rmd\omega)}$ and defines
what is nowadays called the Hellinger distance on probability measures via
$\He(\mu,\nu) = \big(2{-} 2\rho(\mu,\nu)\big)^{1/2}$.

Since the early 1960s, the name Hellinger distance is consistently used in
probability theory and statistics (see. e.g.\ \cite{Leca70APAN}), which
was checked by a search of ``Hellinger distance'' in MathSciNet in 2023, which
led to more than 600 hits in abstracts or titles. In particular, Rao's paper
\cite[\S3, p.\,304]{RaoVar63DGP} introduces the Hellinger integral and the
Hellinger distance explicitly by name.   

The Fisher-Rao distance was popularized by \cite{Rao45IAAE} as geodesic
distance for the Fisher information metric. It is interesting to see that the
abstract version of the Fisher metric given in \eqref{eq:I.QuadrForm} is
exactly of the form of the Hellinger integrals \eqref{eq:IntegralHelli}
introduced already in \cite{Hell09NBTQ}, however in a rather restrictive
setting.

\section{Various Fisher-Rao distances}
\label{se:FisherRaoDist}

We first discuss the general construction of the Fisher-Rao distance
$\FR_\calS$ for general subset $\calS\subset \MM$ without direct reference to
the local Fisher information metric $\bfg_\mu(\nu_1,\nu_2)$ defined in
\eqref{eq:I.QuadrForm}. 
 
\subsection{The general construction  for subsets $\calS \subset \MM$} 

Throughout, our subsets $\calS$ will be path-connected, i.e.\ between any to
points $\mu_0,\mu_1\in \calS$ there exists a continuous path $\gamma \in
\rmC^0([0,1];\MM)$ with $\gamma(s)\in \calS$ for all $s\in [0,1]$. The
intrinsic length of $\gamma$ is defined by  
\begin{equation}
   \label{eq:IntrLength}
   L_\He(\gamma) = \sup \Bigset{\sum_{i=1}^N  \He\big(\gamma(s_i), 
    \gamma(s_{i-1})\big) }{N\in \N,\ 0=s_0{<}s_1{<}{\cdots} {<} s_{N-1}{<}s_N=1  },
\end{equation}
and $L_\He(\gamma)<\infty$ means that $\gamma$ is rectifiable in
$(\MM,\He)$. In that case, we can change the parametrization with a monotone
function $t:[0,1]\to [0,1]$ such that $\wt\gamma=\gamma \circ t$ has constant
speed, namely 
\begin{equation}
  \label{eq:ConstSpeed}
  \He\big(\wt\gamma(r),\wt\gamma(s)\big) =
  \He\big(\gamma(t(r)),\gamma(t(s))\big) = | r{-}s|\,L_\He(\gamma) \quad
\text{for all } r,s\in [0,1]. 
\end{equation}
To see this, consider the function $\ell:[0,1]\to [0,1]$ with
$\ell(t) = L_\He\big(\gamma|_{[0,t]}\big)/L_\He(\gamma)$, which is continuous,
non-decreasing and surjective. Now, we can choose any $t :[0,1]  \to [0,1]$ such
that $t\circ \ell=\mafo{id}_{[0,1]}$, i.e.\ $t(s) \in \ell^{-1}(\{s\})$. Using
the metric derivative  or speed) $|\dot{\wt\gamma}|_\He(s)$ as defined in
Section \ref{su:AbsCurves}, one then has the relation 
\begin{equation}
  \label{eq:Length.Quadr}
  L_\He(\gamma)^2=L_\He(\wt\gamma)^2 = \lim_{N\to \infty} \sum_{i=1}^N N \,\He\big(
\wt\gamma(i/N),\wt\gamma((i{-}1)/N)\big)^2 .
\end{equation}

Using the length $L_\He$, the Fisher-Rao distance $\FR_\calS$ for the subset
$\calS$ is defined by 
\begin{subequations}
  \label{eq:FR.2abs}
  \begin{align}
\FR_\calS(\mu_0,\mu_1) &:= \inf\Bigset{L_\He(\gamma) }{ \gamma\in
  \rmC^0([0,1];\calS),\ \gamma(0)=\mu_0,\ \gamma(1)=\mu_1) }
\\
  \label{eq:FR.2abs.b}
&= \inf\Bigset{ \int_0^1 \!\!\big(|\dot \gamma|_\He(s)\big)^2\dd s }{ \gamma
  \text{ $2$-absol.\  contin., } \gamma(0)=\mu_0,\ \gamma(1)=\mu_1) }.
  \end{align}
\end{subequations}
From the definition we immediately obtain the lower estimate 
\begin{equation}
\label{eq:He.leq.FR}
\FR_\calS(\mu_0,\mu_1)\geq \He(\mu_0,\mu_1)  \quad \text{for all } \mu_0,\mu_1
\in \calS,
\end{equation}
and equality can only hold if the geodesic curve $\gamma^\He_{\mu_0\to\mu_1}$
(cf.\ \eqref{eq:I.HellGeod}) is contained in $\calS$.

If $\calS$ is a smooth manifold that is given by a parameter $p\in D\subset X$,
where $X$ is a Banach space (e.g.\ $\R^m$) in the form  $ \calS = \bigset{\wh
  \mu(p) }{  p \in D\subset X }$, then the induced metric tensor, also called
Fisher's information matrix. can be reconstructed via 
\begin{equation}
\label{eq:FishM.from.He}
 \big\langle \wh \bbG(p)v, v\rangle = \lim_{\eps\to 0^+} \frac1{\eps^2}
 \He\big( \wh \mu(p), \wh\mu(p{+}\eps v)\big)^2 
\end{equation}

\subsection{Bhattacharya distance alias spherical Hellinger distance}
\label{su:BhattDist}

The simplest and still very important submanifold in $\MM$ is the set
of probability measures $\PP \subset \MM$. 

Since the Hellinger distance satisfies the general scaling property
\begin{equation}
  \label{eq:Scalingr0r1}
  \He(r_0^2\mu_0, r_1^2\mu_1)^2 = r_0r_1 \He(\mu_0,\mu_1)^2 +
\sigma^2 (r_0^2{-}r_0r_1)\mu_0(\Omega) + \sigma^2 (r_1^2{-}r_0r_1)\mu_1(\Omega),
\end{equation}
we can interpret the set $\MM$ of all (non-negative) measures as a
metric cone over the base space $\PP$, in the sense of
\cite[\S3.6]{BuBuIv01CMG}. This general geometric construction implies that
the induced Fisher-Rao distance $\FR_\calP$ on $\PP$, which is also
called \emph{Bhattacharya distance} $\Bh$ (cf.\ \cite{Rao45IAAE}), as well as
the exact form of the geodesics can be given explicitly, see
\cite[Sec.\,2]{LasMie19GPCA} for the details.  In the latter work this distance
is called the \emph{spherical Hellinger distance} because the base space
$\calP(\Omega)$ is called the spherical space of the cone. We obtain
\[
\Bh(\nu_0,\nu_1)= 2\sigma \arcsin\big( \frac1{2\sigma} \He(\nu_0,\nu_1)\big)
= \sigma \arccos\big(1 {-}\tdfrac1{2\sigma^2} \He(\nu_0,\nu_1)^2 \big) ,
\]
Note that $\He$ takes values in $[0,\sigma\sqrt{2}]$ (because
$\nu_j(\Omega)=1$), whereas $\Bh$ takes 
values in $[0, \sigma\,\pi/2]$. The maximum values are achieved if $\nu_0$ and
$\nu_1$ are mutual singular such that $\sqrt{\nu_0\nu_1}=0$, e.g.\ for Dirac
measures  $\nu_j=\delta_{\omega_j}$ with $\omega_0\neq \omega_1$. 

According to \cite[Thm.\,2.7]{LasMie19GPCA}, the geodesics take the form
\begin{align*}
\gamma^\Bh_{\nu_0\to\nu_1}(s) &= n(s) \gamma^\He_{\nu_0\to\nu_1}(t(s)) 
\quad \text{with } t(s) 
= \frac{\sin( s \delta)}{\sin\!\big((1{-}s)\delta\big) + \sin(s \delta)} \in [0,1],
\\
&\quad \delta= \tdfrac1\sigma \,\Bh(\nu_0,\nu_1)\in [0,\tfrac\pi2], \text{ and } n(s)= \big(
\frac{\sin\!\big((1{-}s)\delta\big) + \sin(s \delta)}{\sin(\delta)}  \big)^2
\in [1,2].  
\end{align*}
Recalling $\gamma^\He_{\nu_0\to\nu_1}(t)(\Omega)= 1 -
(t{-}t^2)\frac1{\sigma^2}\He(\nu_0,\nu_1)^2 = 1 -2 (t{-}t^2)(1{-}\cos\delta)$
we indeed find $ n(s)\gamma^\He_{\nu_0\to\nu_1}(t(s))(\Omega) \equiv 1$, i.e.\
$\gamma^\Bh_{\nu_0\to\nu_1}(s) \in \PP$. 

Of course, one can also go opposite and consider $\MM$ as a cone over $\PP$,
i.e.\ $\MM={[0,\infty[}\PP$. Then, we have the relation 
\begin{subequations}
  \label{eq:Hell.from.Bhatt}
 \begin{align}
  \label{eq:Hell.from.Bhatt.a}
 \He(r_0^2\nu_0,r_1^2 \nu_1)^2 &= r_0r_1 \He(\nu_0,\nu_1)^2 +
  \sigma^2(r_1{-}r_0)^2 
\\  \label{eq:Hell.from.Bhatt.b}
&= \sigma^2\big( (r_1{-}r_0)^2 + 2r_0r_1 \big( 
1{-}\cos\big(\tdfrac1\sigma \Bh(\nu_0,\nu_1)\big) \big) .
\end{align}
\end{subequations}

\begin{remark}[Unique characterization]
\label{re:UniqChar}
In \cite{KLMP13GDGC} the question is studied whether $\Bh$ is the only
Riemannian distance (up to a positive scalar factor) on a finite-dimensional,
smooth manifold that is invariant under all pushforwards (cf.\ Section
\ref{su:Pushforward}) with respect to smooth diffeomorphisms. It is shown that
this is true for dimension $n\geq 2$ but it may fail for $n=1$. It is unclear
whether the pathology for $n=1$ disappears if pushforwards for all measurable
homeomorphisms are considered.
\end{remark}

\subsection{General cones}
\label{su:GenCones}

A special property of the Hellinger distance is the scaling property
\eqref{eq:Scalingr0r1} that suggests that the transition between $\MM$ and
$\PP$ can be seen as the transition between 
the cone $\bbC_\calP \subset \MM$  and its base space $\calP\subset \PP$:  
\[
\calP \subset \PP \quad \text{and} \quad \bbC_\calP:= {[0,\infty[}\calP:= 
\bigset{r^2 \nu}{ r\in {[0,\infty[}, \ \nu \in \calP} =\calS,
\]
Here $\calP$ is chosen arbitrarily such that $(\calP,\FR_\calP)$ is a 
length space. The following result gives an explicit formula for $\FR_\calS$ in
terms of $\FR_\calP$. Whenever $\FR_\calP(\nu_0,\nu_1)\geq \sigma\pi$ we will
find $\FR_\calS(r_0^2\nu_0,r_1^2\nu_1)=\sigma(r_0+r_1)$ and the corresponding
geodesic curve is given by 
\[
\gamma^\calS_{r_0^2\nu_0 \to r_1^2\nu_1}(s) = \left\{\ba{cl}
  \big(r_0-(r_0{+}r_1)s\big)^2 \nu_0& \text{for }s\in [0,r_0/(r_0{+}r_1)], \\
 \big(r_0{+}r_1)s-r_0\big)^2 \nu_1 & \text{for }s\in [r_0/(r_0{+}r_1),1].
\ea  \right.
\]
For $\delta:=\tdfrac1\sigma\,\FR_\calP(\nu_0,\nu_1) < \pi$ geodesics in
$\calS$ can be expressed by geodesics in $\calP$ via 
\begin{align*}
\gamma^\calS_{r_0^2\nu_0\to r_1^2\nu_1}(s) &
= \wh r(s)^2\,\gamma^\calP_{\nu_0\to\nu_1}\big(\zeta(s)\big) \text{ with } 
 \zeta(s) = \frac1\delta \arcsin\Big( s\, 
\frac{r_1\sin \delta}{\wh r(s)} \Big) \\
&\quad \text{ and }\wh r(s)^2 =(1{-}s)^2 r_0^2+s^2 r_1^2 +2(s{-}s^2)
r_0r_1\cos_\pi \delta . 
\end{align*}
We refer to \cite[Sec.\,2.3]{LasMie19GPCA} for these formulas of the geodesics,
while the formula for $\FR_S$ given below is from
\cite[\S3.6]{BuBuIv01CMG}. Here we give a sketch of an alternative proof using
metric speeds.

\begin{theorem}[Fisher-Rao distance on cones]
\label{th:FRCones} If $(\calP,\FR_\calP)$ is a length space and $S=\bbC_\calP
\subset \MM$, then $(\calS,\FR_S)$ is a length space with 
\begin{equation}
  \label{eq:FRS.FRP}
  \FR_\calS(r_0^2\nu_0,r_1^2\nu_1)^2 = \sigma^2\Big(r_0^2+r_1^2 - 2r_0r_1
\cos_{\pi}\big( \tdfrac1\sigma\FR_\calP(\nu_0,\nu_1)\big) \Big),
\end{equation}
where $\cos_\pi(r)=\cos\big(\min\{|r|,\pi\}\big)$. 
\end{theorem}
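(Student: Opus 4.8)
The strategy is to exploit the dynamic characterization of Theorem~\ref{th:AbsContCurve}: since geodesics realize $\FR_\calS$ via the infimum of $\int_0^1 (|\dot\gamma|_\He(s))^2\dd s$ over $2$-absolutely continuous curves in $\calS$, and since curves in $\calS=\bbC_\calP$ can be written as $\gamma(s)=r(s)^2\,\nu(s)$ with $\nu(s)\in\calP$, I want to decompose the metric speed of such a curve into a ``radial'' part coming from $r(\cdot)$ and an ``angular'' part coming from the $\calP$-speed of $\nu(\cdot)$. The key computation is an infinitesimal version of the scaling identity \eqref{eq:Scalingr0r1}: for a curve $s\mapsto r(s)^2\nu(s)$ one expects
\[
  \big(|\dot\gamma|_\He(s)\big)^2 = \sigma^2\, r'(s)^2 + r(s)^2\,\big(|\dot\nu|_{\FR_\calP}(s)\big)^2 ,
\]
where $|\dot\nu|_{\FR_\calP}$ denotes the metric speed in $(\calP,\FR_\calP)$. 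This should follow by writing both $\gamma(s)$ and $\gamma(s{+}h)$ with respect to a common dominating measure, applying \eqref{eq:Scalingr0r1} (or rather \eqref{eq:Hell.from.Bhatt.a}) to $\He(r(s)^2\nu(s),r(s{+}h)^2\nu(s{+}h))^2$, and letting $h\to0$; the cross terms vanish to leading order and one is left with the Pythagorean sum above. One must be a little careful that $\He(\nu(s),\nu(s{+}h))$ and $\FR_\calP(\nu(s),\nu(s{+}h))$ have the same metric speed, which holds because $\FR_\calP\geq\He$ with asymptotic equality for nearby points (the local closeness stated in the introduction), and $\FR_\calP$ is the length distance of $\He$ on $\calP$.

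Granting the Pythagorean identity, the problem reduces to the following one-dimensional variational problem: minimize
\[
  \sigma^2\int_0^1\Big( r'(s)^2 + r(s)^2\,\theta'(s)^2\Big)\dd s
\]
over $r\in\rmC^1([0,1];[0,\infty[)$ with $r(0)=r_0$, $r(1)=r_1$ and $\theta\in\rmC^1([0,1];\R)$ with $\theta(0)=0$, $\theta(1)=\delta:=\tfrac1\sigma\FR_\calP(\nu_0,\nu_1)$, subject to the crucial constraint $|\theta(1)-\theta(0)|\leq\delta$ if the $\calP$-geodesic has length $\geq\sigma\pi$ cannot be shortened — more precisely, the angular contribution is realized by a $\FR_\calP$-geodesic from $\nu_0$ to $\nu_1$, whose length cannot be reduced below $\FR_\calP(\nu_0,\nu_1)$, but in the cone a ``detour through the apex'' (taking $r=0$ somewhere) costs $\sigma(r_0+r_1)$ and is cheaper exactly when $\delta\geq\pi$. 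This is precisely the planar ``cone over a circle of circumference $2\pi\sigma$'' picture: in polar-like coordinates $(r,\theta)$ the functional is the Euclidean length-squared, its minimizers are straight segments, and the law of cosines gives $r_0^2+r_1^2-2r_0r_1\cos\delta$ when $\delta<\pi$, while for $\delta\geq\pi$ the geodesic passes through $r=0$ giving $(r_0+r_1)^2$; the two cases combine into $\cos_\pi$. I would either invoke \cite[\S3.6]{BuBuIv01CMG} for this cone lemma directly, or give the short self-contained argument: for the lower bound, any competitor projects to a path in the planar cone of no greater length by the Pythagorean identity, and planar cone distances are classical; for the upper bound, exhibit the explicit geodesic $\gamma^\calS_{r_0^2\nu_0\to r_1^2\nu_1}$ written out in the lines preceding the theorem and check by direct differentiation that its $\He$-speed is constant and equal to the right-hand side of \eqref{eq:FRS.FRP}, using \eqref{eq:Geod.a}--\eqref{eq:Geod.b} for the angular part.

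\textbf{Main obstacle.} The delicate point is not the one-dimensional optimization (which is the classical metric cone computation) but rather justifying the Pythagorean decomposition of the metric speed rigorously at the level of general $2$-absolutely continuous curves, rather than just smooth ones: one needs that an arbitrary rectifiable curve in $\calS$ admits a measurable ``radius'' function $r(s)$ of bounded variation with $\gamma(s)=r(s)^2\nu(s)$, $\nu(s)\in\calP$, and that the metric speeds add in quadrature a.e.\ — in particular that one cannot gain by letting the total mass $\gamma(s)(\Omega)=r(s)^2$ oscillate. The clean way to handle this is to use that $\gamma(s)(\Omega)$ is itself a real-valued absolutely continuous function (being $1$-Lipschitz-controlled by $\|\cdot\|_\mathrm{TV}$ and hence by $\He$ via \eqref{eq:NormEstim}), define $r(s)=\sqrt{\gamma(s)(\Omega)}$, and then apply \eqref{eq:Hell.from.Bhatt.a} with $r_0=r(s)$, $r_1=r(s{+}h)$, $\nu_j$ the normalizations, together with the fact (from Theorem~\ref{th:AbsContCurve} and the definition of $\FR_\calP$) that $|\dot\nu|_{\FR_\calP}$ controls $\He(\nu(s),\nu(s{+}h))/h$ in the limit. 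I expect this to go through but to require the careful bookkeeping that the authors explicitly signal they will only ``sketch''.
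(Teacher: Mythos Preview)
Your proposal is correct and follows essentially the same route as the paper's sketch: derive the Pythagorean decomposition $|\dot\mu|_\He^2 = r^2|\dot\nu|_\He^2 + \sigma^2(r')^2$ from \eqref{eq:Hell.from.Bhatt.a}, then reduce $\FR_\calS$ to a one-dimensional variational problem in $(r,y)$ with the angular constraint $\int y\,\dd s = \FR_\calP(\nu_0,\nu_1)$. The only cosmetic difference is that for the resulting scalar minimization the paper invokes \cite[Thm.\,2]{LiMiSa16OTCR} (noting $ry=\mathrm{const}$ along minimizers), whereas you propose either \cite[\S3.6]{BuBuIv01CMG} or the direct planar-cone law-of-cosines argument; all three yield \eqref{eq:FRS.FRP}.
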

\begin{proof}[Sketch of proof]
  A curve $s\mapsto \mu(s)=r(s)^2 \nu(s)\in \calS \subset \MM$ can having
  finite length has a metric speed a.e.\ in $[0,1]$. According to Theorem
  \ref{th:AbsContCurve} we can calculate the speed via Hellinger's quadratic
  form
\[
\big(|\dot \mu|_\He(s)\big)^2= \int_\Omega \xi_s^2 \dd \mu_s.
\]
Similarly, we can calculate the metric speed of
$s\mapsto \nu(s) \in \calP\subset \PP$. From \eqref{eq:Hell.from.Bhatt.a} 
we obtain 
\begin{align*}
|\dot\mu|_\He(s)^2 & = \lim_{h\searrow} \frac1{h^2} \,\He(\mu(s),\mu(s{+}h))^2 
\\
&= \lim_{h\searrow} \frac1{h^2} \Big( r(s{+}h)r(s) \He(\nu(s),\nu(s{+}h))^2 +
 \sigma^2\big( r(s{+}h)- r(s)\big)^2 \Big)
\\
&= r(s)^2 |\dot\nu|_\He(s)^2 +  \big(r'(s)\big)^2 .
\end{align*} 
Since $\FR_\calS(r_0^2\nu_0,r_1^2\nu_1)^2$ is given of the infimum over 
\[
\int_0^1|\dot\mu|_\He(s)^2\dd s = \int_0^1 \big(r(s)^2 |\dot\nu|_\He(s)^2 +
\sigma^2\big(r'(s)\big)^2 \big) \dd s 
\]
subject to the boundary conditions $\mu(j)=\mu_J=r_j^2 \nu_j$ for $j=0,1$, we
see obtain 
\[
\FR_\calS(r_0^2\nu_0,r_1^2\nu_1)^2= \inf\Bigset{\int_0^1\!\!\big( r^2
  y^2 +(\sigma r')^2\big) \dd s }{ r(0)=r_0,\ r(1)=r_1, \ \int_0^1\!\! y\dd s
  =\FR_\calP(\nu_0,\nu_1) }.
\]  
This minimization problem has been analyzed explicitly by
\cite[Thm.\,2]{LiMiSa16OTCR} for the case $\sigma=1$ (if one sets $\alpha=1$
and $\beta=4$ there). A crucial point is to realize that $ r y =\mafo{const}$
along minimizers. 
The case of general  $\sigma $ follows by scaling
replacing $y $ by $\sigma y$, thus rescaling $\FR_\calP$ by a factor $\sigma$, 
and pulling out the factor $\sigma^2$. 
 
This yields the desired formula \eqref{eq:FRS.FRP}. 
\end{proof}

\subsection{Product measures}
\label{su:ProdMeas}

In applications one is often interested in situations where the basic measure
space is a product space, viz.\ $\Omega= \Omega_1\ti \Omega_2$. Given  subsets
$\calS_1\subset \mathfrak M(\Omega_1)$ and $\calS_2\subset \mathfrak
M(\Omega_2)$ one is then interested in the Fisher-Rao distance for the subset 
\[
\calS_1\otimes \calS_2:=\bigset{\mu_1\oti \mu_2}{ \mu_1\in \calS_1, \mu_2\in
  \calS_2}.
\]   
The natural question is whether $\FR_{\calS_1\oti\calS_2}$ can be expressed of
estimated by $\FR_{\calS_1}$ and $\FR_{\calS_2}$. 

A positive and simple answer can be given in the case that $\calS_j$ are
contained in the probability measures $\mathfrak P(\Omega_j)$. 

\begin{proposition}[Product probability measures]
\label{pr:ProdProbMeas}
Assume that $\calP_j \subset \mathfrak P(\Omega_j)$ and that $\FR_{\calP_j}$ are
finite for $j=1,2$, then we have 
\begin{equation}
  \label{eq:FR.ProdProb}
  \FR_{\calP_1\oti\calP_2}(\nu_1\oti \nu_2,\eta_1\oti \eta_2)^2 = 
  \FR_{\calP_1}(\nu_1,\eta_1)^2 +  \FR_{\calP_2}(\nu_2, \eta_2)^2 \quad
  \text{for all } \nu_j,\eta_j\in \mathfrak P (\Omega_j). 
\end{equation}
\end{proposition}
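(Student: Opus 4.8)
The strategy is to establish the two inequalities separately, using the dynamic characterization of Fisher-Rao distances in \eqref{eq:FR.2abs.b} together with the product formula \eqref{eq:HellProbProd} for the Hellinger distance on probability measures. Throughout, recall that since we work with probability measures the relevant correction in \eqref{eq:HellProbProd} is of second order in the individual Hellinger distances, which is exactly what makes the Pythagorean identity plausible.

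\emph{The ``$\leq$'' direction.} Given competitor curves $s\mapsto \nu_1(s)\in\calP_1$ and $s\mapsto\nu_2(s)\in\calP_2$ joining $\nu_j$ to $\eta_j$, form the product curve $\gamma(s)=\nu_1(s)\oti\nu_2(s)\in\calP_1\oti\calP_2$. Using \eqref{eq:HellProbProd} applied to the increments $\nu_j(s)$ and $\nu_j(s{+}h)$, one computes
\[
\He\big(\gamma(s),\gamma(s{+}h)\big)^2 = \He(\nu_1(s),\nu_1(s{+}h))^2 + \He(\nu_2(s),\nu_2(s{+}h))^2 - \tdfrac1{2\sigma^2}\He(\nu_1(s),\nu_1(s{+}h))^2\He(\nu_2(s),\nu_2(s{+}h))^2,
\]
and after dividing by $h^2$ and letting $h\searrow0$ the cross term vanishes (it is of order $h^2$ since each factor is $O(h)$), giving $|\dot\gamma|_\He(s)^2 = |\dot\nu_1|_\He(s)^2 + |\dot\nu_2|_\He(s)^2$ at a.e.\ $s$ where both metric speeds exist. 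Integrating and taking the infimum over all admissible $\nu_1,\nu_2$ yields $\FR_{\calP_1\oti\calP_2}^2 \leq \FR_{\calP_1}^2 + \FR_{\calP_2}^2$; here one should first reparametrize each $\nu_j$ to constant speed so that the two curves are genuinely $2$-absolutely continuous and the integrals are finite.

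\emph{The ``$\geq$'' direction.} This is the main obstacle, because an arbitrary $2$-absolutely continuous curve $\gamma(s)\in\calP_1\oti\calP_2$ is of the form $\gamma(s)=\nu_1(s)\oti\nu_2(s)$ with $\nu_j(s)\in\calP_j$, but one must check that the marginal curves $\nu_j$ are themselves $2$-absolutely continuous and that $|\dot\gamma|_\He(s)^2 \geq |\dot\nu_1|_\He(s)^2 + |\dot\nu_2|_\He(s)^2$ a.e. The key point is that the product decomposition $\mu_1\oti\mu_2$ determines the factors $\mu_1,\mu_2$ uniquely (up to the trivial rescaling, which is fixed here by the normalization $\mu_j(\Omega_j)=1$), so $\nu_j(s)$ is well-defined; continuity of $\gamma$ in total variation forces continuity of each $\nu_j$ (the marginals depend continuously on the product measure). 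Then again from \eqref{eq:HellProbProd} one has, for the discrete increments along a partition, $\He(\nu_j(s_{i-1}),\nu_j(s_i))^2 \leq \He(\gamma(s_{i-1}),\gamma(s_i))^2$ (dropping the nonnegative cross term and the other nonnegative summand), so each $\nu_j$ is rectifiable with $|\dot\nu_j|_\He(s) \leq |\dot\gamma|_\He(s)$; in particular $\nu_j$ is $2$-absolutely continuous. To upgrade the two separate pointwise bounds to the \emph{sum} bound $|\dot\nu_1|_\He^2 + |\dot\nu_2|_\He^2 \leq |\dot\gamma|_\He^2$, pass to the limit $h\searrow0$ directly in the identity \eqref{eq:HellProbProd} for the increments of $\gamma$: dividing by $h^2$ gives $|\dot\gamma|_\He(s)^2 = |\dot\nu_1|_\He(s)^2 + |\dot\nu_2|_\He(s)^2 - \tdfrac1{2\sigma^2}\cdot h^2\cdot(\text{bounded})\to |\dot\nu_1|_\He(s)^2 + |\dot\nu_2|_\He(s)^2$ at every $s$ where all three metric speeds exist (a.e.\ $s$, by \cite[Thm.\,1.1.2]{AmGiSa05GFMS} applied to the three curves). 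Hence $\int_0^1|\dot\gamma|_\He^2\,\dd s = \int_0^1|\dot\nu_1|_\He^2\,\dd s + \int_0^1|\dot\nu_2|_\He^2\,\dd s \geq \FR_{\calP_1}(\nu_1,\eta_1)^2 + \FR_{\calP_2}(\nu_2,\eta_2)^2$, and taking the infimum over $\gamma$ gives the reverse inequality, closing the argument. One subtle point worth a line: one must ensure $|\dot\gamma|_\He\in\rmL^2$ on the curves being considered (true by definition of $2$-absolute continuity), and that the metric speeds of the factor curves, being dominated by $|\dot\gamma|_\He$, are also $\rmL^2$, which is what licenses identifying $\int|\dot\nu_j|_\He^2$ with a quantity bounded below by $\FR_{\calP_j}^2$ via \eqref{eq:FR.2abs.b}.
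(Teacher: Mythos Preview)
Your proof is correct and follows essentially the same route as the paper: both arguments rest on the product formula \eqref{eq:HellProbProd} to obtain the pointwise metric-speed identity $|\dot\gamma|_\He^2 = |\dot\nu_1|_\He^2 + |\dot\nu_2|_\He^2$ and then invoke the dynamic characterization \eqref{eq:FR.2abs.b}. The paper compresses the two inequalities into one step by first recording the two-sided estimate $\tfrac12(\He(\nu_1,\eta_1)^2+\He(\nu_2,\eta_2)^2)\leq \He(\nu_1\oti\nu_2,\eta_1\oti\eta_2)^2\leq \He(\nu_1,\eta_1)^2+\He(\nu_2,\eta_2)^2$, which immediately gives rectifiability of the product curve iff both factor curves are rectifiable; your treatment of the ``$\geq$'' direction (unique recovery of factors via marginals, domination of factor speeds by $|\dot\gamma|_\He$) is a slightly more explicit unpacking of the same content.
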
  
\begin{proof}
Since we are working with probability measures we can use the simple
representation \eqref{eq:HellProbProd} for the Hellinger distance of product
measures. We first observe that $\He(\nu_j,\eta_j)^2 \leq 2\sigma^2$ implies 
\begin{align*} 
\frac12\big(\He(\nu_1,\eta_1)^2+\He(\nu_2,\eta_2)^2\big)& \leq 
\He(\nu_1,\eta_1)^2+\He(\nu_2,\eta_2)^2- \frac1{2\sigma^2}
\He(\nu_1,\eta_1)^2\He(\nu_2,\eta_2)^2\\
& = 
\He(\nu_1\oti\eta_1,\nu_2,\eta_2)  \leq 
\He(\nu_1,\eta_1)^2+\He(\nu_2,\eta_2)^2.
\end{align*}
Hence the curve $s \mapsto \mu(s)$ is rectifiable if and only if the two curves
$s \mapsto \nu_j(s)$ are rectifiable. 

This allows us to calculate the metric speed for curves $\mu(s) =
\nu_1(s)\oti \nu_2(s)$ as follows. For a.a.\ $s\in [0,1]$ the three metric
derivatives $|\dot\mu|_\He$,  $ |\dot\nu_1|_\He$, and $|\dot\nu_2|_\He$ exists
and for those points we have . 
\begin{align*}
|\dot\mu|_\He(s)^2 &= \lim_{h\searrow 0} \frac1{h^2} \He\big(\mu(s),\mu(s{+}h)\big)^2 
\\
&\overset{\text{\eqref{eq:HellProbProd}}}= \lim_{h\searrow 0} \frac1{h^2}\Big(
\He\big(\nu_1(s),\nu_1(s{+}h)\big)^2 + \He\big(\nu_2(s),\nu_2(s{+}h)\big)^2\\
&\hspace*{6em} -
\frac1{2\sigma^2}
\He\big(\nu_1(s),\nu_1(s{+}h)\big)^2\He\big(\nu_2(s),\nu_2(s{+}h)\big)^2 \Big)
\\
&= \lim_{h\searrow 0} \frac1{h^2} \He\big(\nu_1(s),\nu_1(s{+}h)\big)^2
+ \lim_{h\searrow 0} \frac1{h^2} \He\big(\nu_2(s),\nu_2(s{+}h)\big)^2 - 0
\\
&= |\dot\nu_1|_\He(s)^2 +  |\dot\nu_2|_\He(s)^2.
\end{align*}

Thus, formula \eqref{eq:FR.ProdProb} follows immediately as calculating the
Fisher-Rao distances via the characterization in \eqref{eq:FR.2abs.b}. 
\end{proof}

This simple additive structure for the Fisher-Rao distance of product measures
disappears if one leaves the realm of probability measures. Relation
\eqref{eq:HellDistProd} can be rewritten in the form 
\begin{equation}
  \label{eq:HellDistProd2}
\begin{aligned}
  \He\big(\nu_1\oti \nu_2, \eta_1\oti \eta_2\big)^2 &
  = \tdfrac12\big(\ol\nu_2 {+} \ol\eta_2  \big)   \He(\nu_1,\eta_1)^2 
    + \tdfrac12\big(\ol\nu_1 {+} \ol\eta_1  \big)   \He(\nu_2,\eta_2)^2 
\\
&\quad 
 + \sigma^2 \big( \ol\nu_1{-} \ol\eta_1\big) \big( \ol\nu_2{-}
 \ol\eta_2\big) - \frac1{2\sigma^2} \He(\nu_1,\eta_1)^2  \He(\nu_2,\eta_2)^2 ,
\end{aligned}
\end{equation}
where $\ol\nu_j=\nu_j(\Omega_j)$ and $\ol\eta_j=\eta_j(\Omega_j)$. 
Hence, for curves $\mu(s)= \nu(s)\oti \eta(s)$ we obtain the
metric speed 
\[
|\dot\mu|_\He^2= m_\eta(s) |\dot \nu|_\He^2 + m_\nu(s) |\dot\eta|_\He^2 
 + \sigma^2  m'_\nu(s)m'_\eta(s)
\]
with $m_\mu(s) = \nu_s(\Omega)$ and $m_\eta(s) = \eta_s(\Omega)$.
Hence, there is a much stronger interaction between the measures in
$\calS_1\subset \mathfrak M(\Omega_1)$ and those in $\calS_2\subset \mathfrak
M(\Omega_2)$, and in the general case an explicit form for
$\FR_{\calS_1\oti\calS_2}$ seems out of reach. 

There is one case that can be treated, namely if $\calS_1$ and $\calS_2$ are
cones over $\calP_1$ and $\calP_2$, respectively. In this case, we have
\[
S_1\oti S_2= \bbC_{\calP_1} \oti \bbC_{\calP_2} = \bbC_{\calP_1\oti \calP_2} .
\]
Thus, we can first apply Proposition \ref{pr:ProdProbMeas} to obtain
$\FR_{\calP_1\oti \calP_2}$ and afterwards invoke Theorem \ref{th:FRCones}.

\section{Classical families of probability distributions}
\label{se:FamProbDistr}

As applications of the above theory we treat a few classical examples. 
For further applications we refer to \cite{Mayb16FRMC}.

\subsection{Translations of a measure} 
\label{su:TransMeasures}

As a first example we treat the case $\Omega =\R^n$, fix a $\mu\in \mathsf
(\R^n)$,  and using the 
diffeomorphisms $\Phi^y: x\mapsto x{+}y$ we define 
\[
\calS_\mafo{trans}(\mu):= \bigset{\Phi^y_\# \mu}{ y \in \R^n}.
\]
Observe that $\calS_\mafo{trans}(\mu)$ is not path connected if $\mu$ has a
discrete part, because for Dirac measures we have $\Phi^y
\delta_z=\delta_{z-y}$ and $\He( \delta_{x_1}, \delta_{x_2})= \sqrt2$ for
$x_1\neq x_2$. If $\mu$ is absolutely continuous with respect to the
Lebesgue measure, $\calS_\mafo{trans}(\mu)$ is path connected but the intrinsic
length may still be infinite for nontrivial curves. For this 
consider $n=1$ and $\wt\mu = \bm1_{[0,1]} \dd x $ giving the distances 
\[
\He(\Phi^y_{\#}\wt\mu,\Phi^z_{\#}\wt\mu)^2= \sigma^2\int_\R\big(\bm1_{[0,1]}(x{-}y)-
\bm1_{[0,1]}(x{-}z) \big) \dd x = 2 \sigma^2 \min\big\{ |y{-}x|, 1\big\}.
\]
Hence, for every non-constant curve $\gamma$ in $\calS_\mafo{trans}(\wt\mu)$ we
have $L_\He(\gamma)=\infty$, which implies $ \FR_{\calS_\mafo{trans}(\wt\mu)}(
\Phi^y_{\#}\wt\mu,\Phi^z_{\#}\wt\mu)=\infty$ for $y\neq z$.  

However, considering $\wh\mu= f \dd x$ with $\sqrt{f} \in \rmH^1(\R^d)$, it is
straightforward to show that 
\[
 \big( \FR_{\calS_\mafo{trans}(\wh\mu)}(
\Phi^y_{\#}\wh\mu,\Phi^z_{\#}\wh\mu)\big)^2 = (z{-}y)\cdot \bbA (z{-}y),
\]
where the induced translation invariant Riemannian metric on the parameter
space $D=\R^n$ is given by 
\[
\bbA:=\int_{\R^d} \frac{\sigma^2}{f}\;\nabla f \otimes \nabla f \dd x= 
\int_{\R^d} 4\sigma^2\:\nabla \sqrt{f} \otimes \nabla \sqrt{f} \dd x.
\]

Clearly, the induced distance on $\R^n$ is translation invariant. If there are
further symmetries (reflections or rotations) of the measure $\mu=f\dd x$ they
are reflected in the induced matrix $\bbA$ by applying Lemma \ref{le:Pushforward}.  

\subsection{Submanifold of Poisson measures on $\N_0^d$}
\label{su:Poisson}

For $\Omega= \N_0^d$ the multivariate Poisson distribution $\pi_\alpha$ for
$\alpha \in {[0,\infty[}^d$ is given by (where $n=(n_i)_i \in \N_0^d$)
\[
 \pi_\alpha \big(\{n\}\big) = \frac{\ee^{-\ol\alpha} \,\alpha^n}{n!} \quad
 \text{ with } \ \ol\alpha = \sum_{i=1}^d \alpha_i, \ \ 
\alpha^n=\prod_{i=1}^d \alpha_i^{n_i}, \  \text{ and } n!=\prod_{i=1}^d
 n_i!\;.
\]

The Hellinger distance between $\pi_\alpha$ and $\pi_\beta$ can easily
calculated by observing that $\sqrt{\pi_\alpha\pi_\beta}$ is a multiple of
$\pi_{\frac12(\alpha+\beta)}$, giving
\[
\He(\pi_\alpha,\pi_\beta)^2 = 2\sigma^2\big(1 -  \ee^{b(\alpha,\beta)}\big) \  
\text{ with } b(\alpha,\beta) = \sum_{i=1}^d \!\sqrt{\alpha_i\beta_i} -
\tfrac{\alpha_i{+}\beta_i}2 = - \frac12 \sum_{i=1}^d\!\big(\sqrt{\alpha_i}-
\sqrt{\beta_i}\big)^2 .
\]  

The Fisher metric tensor on the $d$-dimensional manifold
\[
\calS_\mafo{Poiss}(\N_0^d) = \bigset{ \pi_\alpha }{ \alpha \in {[0,\infty[}^d }
\]
is now most easily constructed by applying \eqref{eq:FishM.from.He}, giving
\[
\big\langle \bbG_\mafo{Poiss}(\alpha) v, v\big\rangle = \frac{\sigma^2}{4} 
\sum_{i=1}^d \frac{v_i^2}{\alpha_i}. 
\]

With this, the associated Fisher-Rao distance for $\calS_\mafo{Poiss}(\N_0^d)$
can be calculated explicitly and gives the Hellinger distance on the positive
orthant ${[0,\infty[}^d = \calM(\{1,..,d\})$ of $\R^d$, viz.\
\[
\FR_\mafo{Poiss}(\pi_\alpha,\pi_\beta)^2 = \sigma^2\sum_{i=1}^d
\big(\sqrt{\alpha_i}-\sqrt{\beta_i}\big)^2 .
\]
The sum structure of this formula is to be expected because the Poisson
distributions are tensor products of scalar Poisson distributions, such that
Proposition \ref{pr:ProdProbMeas} applies.  

Surprisingly, as in the case of the Bhattacharya (a.k.a.\ spherical Hellinger
distance), the Fisher-Rao distance can be expressed in terms of the Hellinger
distance itself, namely 
\[
\FR_\mafo{Poiss}(\pi_\alpha,\pi_\beta)^2 = - 2\sigma^2 \log\big( 1 -
\frac1{2\sigma^2}\He(\pi_\alpha, \pi_\beta)^2\big) \geq \He(\pi_\alpha, \pi_\beta)^2.
\]
Note that $\FR_\mafo{Poiss}(\pi_\alpha,\pi_\beta)$ can become arbitrarily
large, while $ \He(\pi_\alpha, \pi_\beta) \in [0,\sqrt2\,\sigma]$. 

\subsection{Submanifold of exponential distributions}
\label{su:Exponential}

We now consider the case $\Omega = {[0,\infty[}^n$, the parameter space
$A={]0,\infty[}^n \subset \R^n$,  and the probability
densities 
\[
\eps_\alpha (x) = p(\alpha) \ee^{-\alpha\cdot x} \quad \text{with } p(\alpha) =
\prod_{i=1}^n \alpha_i.
\]  
The exponential submanifold is then given by 
\[
\calS_\mafo{exp} = \bigset{\eps_\alpha }{ \alpha \in A={]0,\infty[}^n \subset \R^n }
\subset \rmL^1(\Omega)\cap \PP.
\]
Again the Hellinger distances are easily calculated in terms of the arithmetic
mean $a(\alpha,\beta)=\frac12(\alpha{+}\beta) \in A$ and the geometric mean
$g(\alpha,\beta)  = \big( \sqrt{\alpha_i\beta_I}\big)_i\in A$, namely 
\[
\He(\eps_\alpha,\eps_\beta)^2 = 2\sigma^2\Big(1 -  \frac{p\big(g(\alpha,\beta) \big)}
{p\big( a(\alpha,\beta)\big)}\Big) = 2\sigma^2\Big(1 -  \prod_{i=1}^n
\frac{\sqrt{\alpha_i\beta_i}}{ \frac12(\alpha_i{+}\beta_i)} \Big).
\]

The Fisher information matrix is easily obtained by using Fisher's logarithmic
derivative, namely  
\begin{align*}
\big\langle \bbG_\mafo{exp}(\alpha) v,v\big\rangle &= \frac{\sigma^2}4\int_\Omega 
\big|v{\cdot}\nabla_\alpha \log\!\big(\eps_\alpha(x)\big)\big|^2 \eps_\alpha(x) \dd x 
\\
&= \frac{\sigma^2}4 \int_\Omega \big|\sum_{i=1}^n v_i 
      (\tdfrac1{\alpha_i}{-}x_i)\big|^2 \eps_\alpha(x) \dd x 
= \frac{\sigma^2}4 \sum_{k=1}^n \big( \tdfrac{v_k}{\alpha_k}\big)^2. 
\end{align*}
Hence, the Fisher-Rao distance takes the form 
\[
\FR_{\exp}(\eps_\alpha,\eps_\beta)^2 = \frac{\sigma^2}4 \sum_{k=1}^n
\big(\log\alpha_k-\log\beta_k\big)^2= \frac{\sigma^2}4 \sum_{k=1}^n
\big(\log(\alpha_k/\beta_k)\big)^2 . 
\]
Again the sum structure follows because the multivariate exponential
distribution is the tensor product of one-dimensional exponential distributions. 

Moreover, we can use the scaling invariance of the exponential distributions
under mappings $\Phi^D(x) = D x $ with $D=\mafo{diag}\big(
\delta_i)_{i=1,..,n}$. Using that $\Phi^D_\# \eps_\alpha  = \eps_{D\alpha}$ we
find the invariance $\FR_{\exp}(\eps_\alpha,\eps_\beta)= \FR_{\exp}(\eps_{D\alpha},
\eps_{D\beta})$. This implies that $\FR_{\exp}(\eps_\alpha,\eps_\beta)$ can only
depend on $\alpha_k/\beta_k$.

\subsection{Gaussian distributions or multivariate normal distributions}
\label{su:Gaussian}

The Fisher-Rao distance between Gaussian distributions was one of the
motivation to introduce the the concept of Fisher matrix and Fisher-Rao
distance and the one-dimensional case (univariate normal distributions) was
already discussed in \cite{Rao45IAAE}. For the general multivariate case, no
explicit formula is known so far, and estimating the Fisher-Rao distance between
Gaussians from above or below is an active field of research in the area of
information geometry. We refer to \cite{StPoCo15BFRD, PiCoSt19FRIM,
  NieBar19GSI, PiStCo20FRDM, Niel23SAMF} for some recent works in this direction.

\begin{remark}
The situation is different if the Hellinger distance is 
replaced by the Wasserstein distance, because in the Otto-Wasserstein geometry
the geodesic curves between Gaussians remains in the class of Gaussians. This
leads to the so-called Bures-Wasserstein distance, see e.g.\
\cite{BhJaLi19BWDP, KrSpSu21SIBW, LCBBR22?VIWG}.
\end{remark}

We consider the case $\Omega=\R^d$ and use the standard representation of
Gaussian measures $G_{m,\Sigma}\in \mathfrak P(\R^d)$ with density 
\[
p_{m,\Sigma}(x) = \frac{1}{\sqrt{\det(2\pi\Sigma)}} 
\exp\Big(- \frac12 (x{-}m)\cdot \Sigma^{-1}(x{-}m) \Big) ,
\]
where $m\in \R^d$ is the mean and $\Sigma \in \R^{d\ti d}_{\mafo{spd}}$ is the
symmetric and positive definite covariance matrix. 

The Hellinger distance can be calculated easily as
$ \sqrt{ G_{m_0,\Sigma_0} G_{m_1,\Sigma_1} } $ is a multiple of a Gaussian with
covariance $2(\Sigma_0^{-1}{+}\Sigma_1^{-1})^{-1}$:
\[
\He( G_{m_0,\Sigma_0}, G_{m_1,\Sigma_1})^2
 = 2\sigma^2\Big( 1 - \frac{   \exp\big( -\frac14 (m_1{-}m_0) \cdot 
                   (\Sigma_0^{-1}{+}\Sigma_1^{-1}) (m_1{-}m_0)\big) }
  {\big(\det \Sigma_0\,\det\Sigma_1\big)^{1/4} 
   \big(\det(\frac12\Sigma_0^{-1}{+}\frac12\Sigma_1^{-1})\big)^{1/2}}\Big).
\] 
One can check that this formula is invariant under pushforwards (in the sense
of Section \ref{su:Pushforward}) under all affine transformations $\Phi(x) = A
x +x_*$ by noting that $\Phi_{\#}G_{m,\Sigma}= G_{\ol m, \ol \Sigma'}$ with
$\ol m=Am+x_*$ and $\ol \Sigma = A\Sigma A^*$. For instance, the distance only
depends on $m_1{-}m_0$. i.e.\ it is translation invariant. 

Setting $\calP_\mafo{Gauss}:=\bigset{G_{m,\Sigma}}{m\in \R^d,\ \Sigma \in
  \R^{d\ti d}_\mafo{spd}} $ and $\FR_\text{Gauss} = \FR_{\calP_\mafo{Gauss}} $,
the invariance under pushforwards with respect to affine transformations is sill
true for $\FR_\text{Gauss}$, see also \cite{PiStCo20FRDM} or ``Property 1'' in
\cite[P.\,4]{Niel23SAMF}. This is best seen by looking at the equation for
the geodesic curves inside $\calP_\mafo{Gauss}$. For this one calculates the
Fisher information matrix; using the parameters $(m,\Sigma)$ it takes the form 
\begin{equation}
  \label{eq:FishMatrGauss}
  \binom{v}{V} \cdot \bbG_\mafo{Gauss}(m,\Sigma)\binom v V= \sigma^2
  \big( v\cdot \Sigma^{-1} v + \frac12 \trace(\Sigma^{-1}V\Sigma^{-1} V)\big). 
\end{equation}
From this the geodesic equations for $s \mapsto (m(s),\Sigma(s))$ can be
derived by treating the inverse quadratic form
\[
H(m,\Sigma,\xi,\Xi):=\binom{\xi}{\Xi}\cdot \bbK_\mafo{Gauss}(m,\Sigma) 
\binom\xi\Xi = \frac1{4\sigma^2} \Big( \xi\cdot \Sigma\xi + 
  2 \trace(\Sigma\Xi\Sigma\xi) \Big)
\] 
as a Hamiltonian $H$, see \cite[Sec.\,4.1]{LMTZ25?EGHK} for the details. Here
$\xi(s)\in \R^d$ and $\Xi(s) \in \R^{d\ti d}_\mafo{sym}$ are the dual variables
corresponding to $\xi_s=\xi(s,\cdot)\in \rmL^2(\Omega,\gamma_s)$ in Theorem
\ref{th:AbsContCurve}.  After scaling the parameter along the geodesic curves
by the prefactor $ \sigma^2/2$, one arrives at
\begin{subequations}
  \label{eq:GeodEqGauss}
\begin{align}
  \label{eq:GeodEqGauss.a}
&m'= \tfrac{\sigma^2}2\rmD_\xi H(m,\Sigma,\xi,\Xi)= \Sigma \xi, &
& \Sigma'= \tfrac{\sigma^2}2\rmD_\Xi H(m,\Sigma,\xi,\Xi)= 2 \Sigma \Xi \Sigma,
\\
  \label{eq:GeodEqGauss.b}
&\xi'=- \tfrac{\sigma^2}2\rmD_m H(m,\Sigma,\xi,\Xi)=0,&
& \Xi'=-\tfrac{\sigma^2}2\rmD_\Sigma H(m,\Sigma,\xi,\Xi)= -2\Xi\Sigma \Xi
-\frac12 \xi\oti\xi.
\end{align}
\end{subequations}
The translation invariance is seen in the fact that $H$ does not depend on $m$,
which implies that $\xi$ is a constant along solutions by Noether's theorem.
Similarly, for all $ A \in \mafo{GL}(\R^d) $ the mapping 
\[
(m,\Sigma,\xi,\Xi) \mapsto \big( A m, A\Sigma A^*, A^{-*} \xi, A^{-*}\Xi
A^{-1}\big)
\]
leaves the Hamiltonian invariant, and Noether's theory leads to the conserved
quantities 
\[
J= \Sigma \Xi + \frac12 \,m\oti \xi \in \R^{d\ti d} \text{ \ (no symmetry)}.
\]
The $(1{+}d{+}d^2)$ conserved scalar quantities defined via  $H \in \R^1$,
$\mu\in \R^d$ and $J\in \R^{d\ti d}$ are enough to show that the geodesic
curves can be found, see \cite{CalOll91ESIG} or
\cite[Eqn.\,(15)]{PiStCo20FRDM}. However, knowing the geodesics means solving
an initial-value problem, while knowing the distance $\FR_\mafo{Gauss}\big( 
(m_0,\Sigma_0) , (m_1,\Sigma_1) \big)$ means to solve a boundary-value problem.

The case $d=1$, which was already treated in \cite{Rao45IAAE}, is by now
classical and can be related to hyperbolic geometry by introduction
$N=\sqrt{2\Sigma}$ and using the conservation laws 
\[
4\Sigma^2\Xi^2 +2\Sigma \xi^2=h_*=\mafo{const}, \quad\xi = \xi_*, \ \text{ and }
2\Sigma\xi + m\xi=j_*
\] 
leads to the condition that $(m,N)$ lies on the semi-circle 
\[
\big(m- j_*/\xi_*\big)^2 + N^2 = h_*/\xi_*^2. 
\]

Another easy case occurs for $\xi=\xi_*=0\in \R^d$, where now $d \in
\N_*$ is general. This implies $m(s)=m_0=m_1$ and corresponds to Gaussians with
the same center. From $\Sigma'=2 J\Sigma$ and $\Xi'=-2\Xi J$ and the boundary
conditions $\Sigma(i)=\Sigma_i$ for $i=0,1$, we obtain $J=-\frac12\log \big( 
\Sigma_1\Sigma_0^{-1}\big)= -\frac12\Sigma_0^{1/2}
\log\big(\Sigma_0^{-1/2}\Sigma_1\Sigma_0^{-1/2}\big) \Sigma_0^{-1/2}$ and find  
\[
\Sigma(s)  = \Sigma_0^{1/2} \big( \Sigma_0^{-1/2} \Sigma_1\Sigma_0^{-1/2}
\big)^s \Sigma_0^{1/2} . 
\]

A third case can be handled by using the cross-product theory of Section
\ref{su:ProdMeas} and joining the two cases from above, 
namely the case where $m_1{-}m_0$ is an eigenvector of
$\Sigma_1$ and $\Sigma_0$. Using the invariance under affine transformations
(rotations suffice) we can assume $m_1{-}m_0 = \delta e_1$. In that case, we
can find a solution of the Hamiltonian system in the form 
\begin{align}
\label{eq:m1m0EV}
&m(s) = m_0 + \alpha(s) e_1, && \Sigma(s)=\bma{cc} \gamma(s)&0\\
0&\Gamma(s)\ema \in \R^{1\ti 1}\oti \R^{(d-1)\ti(d-1)}.
\\
&\mu(s) = \beta e_1, && \nonumber
\Xi(s)=\bma{cc} \zeta(s)&0\\ 0 & \Pi(s)\ema \in \R^{1\ti 1}\oti
\R^{(d-1)\ti(d-1)}.
\end{align}
For $(\alpha,\gamma)$ the one-dimensional theory applies, while for
$(\Gamma,\Pi)$ the theory with the same center $0$ can be used. 

In these three cases the following results for
$\FR_\mafo{Gauss}(G_{m_0,\Sigma_0}, G_{m_1,\Sigma_1})$ are obtained, see
\cite{PiStCo20FRDM}. To present the results in a unified way we use the
function $M:\R_\geq\ti \R_>\to \R_\geq$ with  
\[
M(\Delta,\Lambda) := \sqrt{2}\:\log\bigg( \frac18\Big(\sqrt{\Delta +
  2(\Lambda^{1/4}{+}\Lambda^{-1/4})^2} + \sqrt{\Delta +
  2(\Lambda^{1/4}{-}\Lambda^{-1/4})^2} \Big)^2 \bigg)
\]
which essentially arises from hyperbolic theory and satisfies $M(0,\Lambda)=
2^{-1/2}\left| \log \Lambda\right| $, $M(\Delta,1)=\sqrt{\Delta} +
O(\Delta)_{\Delta\to 0}$, and $M(\Delta,1)= \sqrt2\,\log \Delta+
O(1)_{\Delta\to \infty}$.

\begin{theorem}[Fisher-Rao distance within Gaussians]
\label{th:FR.Gauss} \mbox{ }
 
(A) In the one-dimensional case $d=1$ we have the formula
\begin{equation}
  \label{eq:FR.Gauss.d=1}
  d=1:\quad \FR_\mafo{Gauss}(G_{m_0,\Sigma_0}, G_{m_1,\Sigma_1}) = \frac\sigma2 \,
  M\Big(\tdfrac{(m_1{-}m_0)^2}{\sqrt{\Sigma_0\Sigma_1}}, 
  \frac{\Sigma_1}{\Sigma_0}\Big). 
\end{equation}

(B) In the case with the same center $m_1=m_0$ and $d\geq 1$ we have 
\begin{equation}
  \label{eq:FR.Gauss.SameC}
  d\geq 1:\quad \FR_\mafo{Gauss}(G_{m_0,\Sigma_0}, G_{m_0,\Sigma_1})^2
  =\frac{\sigma^2}4 \sum_{n=1}^d M(0,\Lambda_n)^2 =
  \frac{\sigma^2}2 \sum_{n=1}^d \big(\log \Lambda_n  \big)^2 ,
\end{equation}
where $\Lambda_n=
\lambda_n(\Sigma_0^{-1/2}\Sigma_1\Sigma_0^{-1/2}) $ is the $n$-th eigenvalue of
the symmetric matrix $\Sigma_0^{-1/2}\Sigma_1\Sigma_0^{-1/2}$. 

(C) Let $\Lambda_n>0$ be as in (B) and assume $ \Sigma_0^{-1} (m_1{-}m_0) =
\Lambda_0 \Sigma_1^{-1}(m_1{-}m_0)$, then,
\begin{equation}
  \label{eq:FR.Gauss.m1m0EV}
 \begin{aligned} \FR_\mafo{Gauss}(G_{m_0,\Sigma_0}, G_{m_1,\Sigma_1})^2  &=
  \frac{\sigma^2}4\,M\Big(
   |\Sigma_0^{-1/2}(m_1{-}m_0)|\,|\Sigma_1^{-1/2}(m_1{-}m_0)|,
    \Lambda_1\Big)^2\\
&\qquad  +  \frac{\sigma^2}4 \sum_{n=2}^d M(0,\Lambda_n)^2.  
\end{aligned}
\end{equation}
\end{theorem}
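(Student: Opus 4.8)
The strategy is to reduce all three cases to one-dimensional (or center-preserving) computations already available, using the Hamiltonian structure of the geodesic equations \eqref{eq:GeodEqGauss} together with the conserved quantities. For part (A), the plan is to follow \cite{Rao45IAAE}: with $d=1$ and the substitution $N=\sqrt{2\Sigma}$, the conservation laws $4\Sigma^2\Xi^2+2\Sigma\xi^2=h_*$, $\xi=\xi_*$, $2\Sigma\xi+m\xi=j_*$ force $(m,N)$ onto the semicircle $(m-j_*/\xi_*)^2+N^2=h_*/\xi_*^2$, so geodesics are circular arcs in the upper half-plane. Since the Fisher metric $\bbG_\mafo{Gauss}$ in $(m,N)$-coordinates is (up to the factor $\sigma^2$ and a constant) the hyperbolic metric $\frac{\rmd m^2+\rmd N^2}{N^2}$, the distance is $\frac{\sigma}{\sqrt2}$ times the hyperbolic distance between $(m_0,\sqrt{2\Sigma_0})$ and $(m_1,\sqrt{2\Sigma_1})$; a direct computation of $\mathrm{arccosh}$ of the standard half-plane formula, after the normalization that produces $M$, yields \eqref{eq:FR.Gauss.d=1}. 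The bookkeeping of constants so that the answer comes out as $\frac{\sigma}{2}M(\cdot,\cdot)$ with $M$ as defined is the fiddly part here, but it is elementary.

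For part (B), the plan is to use $\xi=\xi_*=0$, which by \eqref{eq:GeodEqGauss} gives $m(s)\equiv m_0$, $\Sigma'=2J\Sigma$, $\Xi'=-2\Xi J$ (since $\xi\otimes\xi=0$), so $J=\Sigma\Xi$ is constant; solving with $\Sigma(0)=\Sigma_0$, $\Sigma(1)=\Sigma_1$ gives $J=-\tfrac12\Sigma_0^{1/2}\log(\Sigma_0^{-1/2}\Sigma_1\Sigma_0^{-1/2})\Sigma_0^{-1/2}$ and the explicit geodesic $\Sigma(s)=\Sigma_0^{1/2}(\Sigma_0^{-1/2}\Sigma_1\Sigma_0^{-1/2})^s\Sigma_0^{1/2}$, exactly as recorded in the text. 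I would then diagonalize $\Sigma_0^{-1/2}\Sigma_1\Sigma_0^{-1/2}=O\,\mathrm{diag}(\Lambda_n)\,O^{\!\top}$, along which the metric \eqref{eq:FishMatrGauss} with $v=0$ decouples into $d$ scalar problems $\tfrac{\sigma^2}4\cdot\tfrac12(\Sigma^{-1}V)^2$ on the diagonal entries $\gamma_n(s)=\Lambda_n^{\,s}$; each contributes $\tfrac{\sigma^2}4\cdot\tfrac12(\log\Lambda_n)^2=\tfrac{\sigma^2}4 M(0,\Lambda_n)^2$ using $M(0,\Lambda)=2^{-1/2}|\log\Lambda|$, and summing gives \eqref{eq:FR.Gauss.SameC}. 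The only point needing care is that the geodesic just constructed is genuinely minimizing and not merely a critical point; this follows from convexity/uniqueness for the cone-type metric on positive-definite matrices (equivalently, from the hyperbolic-space structure), which I would cite rather than reprove.

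For part (C), the plan is to invoke the product structure of Section \ref{su:ProdMeas}. Using affine invariance of $\FR_\mafo{Gauss}$ (rotations suffice, by the Noether argument in the text), one may assume $m_1-m_0=\delta e_1$. The hypothesis $\Sigma_0^{-1}(m_1-m_0)=\Lambda_0\Sigma_1^{-1}(m_1-m_0)$ says $e_1$ is a common eigenvector of $\Sigma_0^{-1}$ and $\Sigma_1^{-1}$, hence of $\Sigma_0$ and $\Sigma_1$, so both split as $\mathbb R e_1\oplus e_1^\perp$ with $\Sigma_j=\mathrm{diag}(\gamma_j,\Gamma_j)$ and $\Gamma_0,\Gamma_1$ positive-definite on the $(d{-}1)$-dimensional complement. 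The ansatz \eqref{eq:m1m0EV} solves the Hamiltonian system because the blocks do not couple (the off-diagonal of $\Xi$ stays zero and $\xi\otimes\xi=\beta^2 e_1\otimes e_1$ only feeds the first block), so the geodesic is a product of a $1$-D geodesic in $(\alpha,\gamma)$ and a same-center geodesic in $(\Gamma,\Pi)$; by Proposition \ref{pr:ProdProbMeas} (applied after writing $\calP_\mafo{Gauss}$ near these points as a product) the squared Fisher-Rao distances add. The first block contributes $\tfrac{\sigma^2}4 M\big(\frac{\delta^2}{\sqrt{\gamma_0\gamma_1}},\gamma_1/\gamma_0\big)^2$ by part (A), and since $|\Sigma_j^{-1/2}(m_1-m_0)|=\delta/\sqrt{\gamma_j}$ this is exactly the first term of \eqref{eq:FR.Gauss.m1m0EV} with $\Lambda_1=\gamma_1/\gamma_0$; the second block contributes $\tfrac{\sigma^2}4\sum_{n=2}^d M(0,\Lambda_n)^2$ by part (B), where $\Lambda_2,\dots,\Lambda_d$ are the eigenvalues of $\Gamma_0^{-1/2}\Gamma_1\Gamma_0^{-1/2}$, i.e.\ the remaining eigenvalues of $\Sigma_0^{-1/2}\Sigma_1\Sigma_0^{-1/2}$. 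The main obstacle across all three parts is not any single computation but the justification that the constructed critical curves are global minimizers of the length functional \eqref{eq:FR.2abs.b}; I would handle this uniformly by noting that in each reduced case the relevant submanifold is isometric to a piece of hyperbolic space or a product thereof, where geodesics are known to be minimizing, and cite \cite{PiStCo20FRDM, CalOll91ESIG} for the explicit identifications.
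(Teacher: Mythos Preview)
Your proposal is correct and follows essentially the same approach that the paper itself sketches: the paper does not give a self-contained proof but, in the paragraphs preceding the theorem, outlines exactly the reductions you describe (the $N=\sqrt{2\Sigma}$ hyperbolic picture for $d=1$, the $\xi_*=0$ explicit geodesic for the same-center case, and the block ansatz \eqref{eq:m1m0EV} together with Proposition~\ref{pr:ProdProbMeas} for case (C)), and then defers the detailed computations to \cite{PiStCo20FRDM} and \cite{Niel23SAMF}. Your write-up is a faithful fleshing-out of that outline, including the same citations for the minimality issue.
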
  
We refer to \cite[Sec.\,2.1]{PiStCo20FRDM} and \cite[Sec.\,1.2]{Niel23SAMF} and
the references therein for the details of proofs and the corresponding the
calculations.   
The sum structure for $\FR^2$ on formulas \eqref{eq:FR.Gauss.SameC} and 
\eqref{eq:FR.Gauss.m1m0EV} are again
consequences of the cross-product theory in Proposition \ref{pr:ProdProbMeas} 
because the Gaussians can be simultaneously transformed affinely to
have the same eigenbasis. 

However, we warn the reader that the general case cannot be handled by cross
products in all the cases that the initial and final Gaussian have the same
product structure. The reason is that the geodesic curves may lead the space of
such product measures, see the discussion in \cite{StPoCo15BFRD, PiCoSt19FRIM,
  PiStCo20FRDM}. In the present context this can be seen by looking at the
geodesic equations \eqref{eq:GeodEqGauss} involving the rank-one matrix
$\xi\oti \xi \in \R^{d\ti d}_\mafo{sym}$. A cross-product structure would mean
that $\Sigma(s)$ and $\Xi(s)$ would have the same block structure for all
$s\in[0,1]$. Then, this would also hold for $\xi\oti \xi $, but together with
the rank-one condition this implies that only one block can be
non-trivial. This essentially explains the condition in case (C) of the above
theorem.

\paragraph*{Acknowledgments.} The author is very grateful to
Fran\c{c}ois-Xavier Vialard for stimulating and helpful discussion about the
historical background.  The research was partially supported by Deutsche
Forschungsgemeinschaft (DFG) through the Berlin Mathematics Research Center
MATH+ (EXC-2046/1, DFG project no.\ 390685689) subproject ``DistFell''.

\footnotesize

\addcontentsline{toc}{section}{References}

\bibliographystyle{my_alpha}
\bibliography{alex_pub,bib_alex}

\end{document}